
\documentclass[journal]{IEEEtran}
%
% If IEEEtran.cls has not been installed into the LaTeX system files,
% manually specify the path to it like:
% \documentclass[journal]{../sty/IEEEtran}
\usepackage{amsthm,amsmath,amssymb}
\usepackage{mathrsfs}
\usepackage{setspace}
\usepackage{algorithm}
\usepackage{algpseudocode}
\usepackage{booktabs}
\usepackage{graphicx}
\usepackage{float}
\usepackage{subfigure}
\usepackage{caption}

\setcounter{algorithm}{0}
\allowdisplaybreaks[4]
  % Use Input in the format of Algorithm
 % Use Output in the format of Algorithm

\newtheorem{lemma}{Lemma}
\setlength{\textfloatsep}{5pt}

\begin{document}
%
% paper title
% Titles are generally capitalized except for words such as a, an, and, as,
% at, but, by, for, in, nor, of, on, or, the, to and up, which are usually
% not capitalized unless they are the first or last word of the title.
% Linebreaks \\ can be used within to get better formatting as desired.
% Do not put math or special symbols in the title.
\title{GJRA: A Global Joint Resource Allocation Scheme \\ for UAV Service of PEC in IIoTs}
%
%
% author names and IEEE memberships
% note positions of commas and nonbreaking spaces ( ~ ) LaTeX will not break
% a structure at a ~ so this keeps an author's name from being broken across
% two lines.
% use \thanks{} to gain access to the first footnote area
% a separate \thanks must be used for each paragraph as LaTeX2e's \thanks
% was not built to handle multiple paragraphs
%

\author{Jin~Wang,
        Caiyan~Jin,
        Qiang~Tang,
        Naixue~Xiong$^*$% <-this % stops a space
\thanks{Jin Wang, Caiyan Jin and Qiang Tang, School of Computer Science and Communication Engineering, Changsha University of Science and Technology, Changsha 410114, Hunan, China, Email: jinwang@csust.edu.cn, tangqiang@csust.edu.cn.
}% <-this % stops a space
\thanks{Naixue Xiong, Tianjin Key Laboratory of Advanced Networking College of Intelligence and Computing Tianjin University, Tianjin 300000, China.}%
%\thanks{Qiang Tang, School of Computer Science and Communication Engineering, Changsha University of Science and Technology, Changsha 410114, Hunan, China, tangqiang@csust.edu.cn}% <-this % stops a space
%\thanks{Manuscript received April 19, 2005; revised August 26, 2015.}
}

\maketitle

% As a general rule, do not put math, special symbols or citations
% in the abstract or keywords.
\begin{abstract}
The Industrial Internet of Things (IIoT) is an emerging paradigm to make industrial operations more efficient and intelligent by deploying a massive number of wireless devices to industry scenes. However, due to the limited computing capability and batteries, the Industrial Internet of Things Devices (IIoTDs) can't perform the computation-intensive or delay-sensitive tasks well and provide long-term services in practical. To tackle these challenges, we present an effective global joint resource allocation scheme for Unmanned Aerial Vehicle (UAV) service of Pervasive Edge Computing (PEC) in Industrial Internet of Things (IIoTs) and studied a collaborative UAV server-IIoTDs scheme in this paper. In our proposed scheme, the IIoTDs can keep high-efficiency performance even their battery ran out, by deploying the UAV as a PEC server and a mobile power source to provide task offloading and energy harvesting opportunities for IIoTDs. In consideration of the offloading position selection, devices resource allocation and system performance, we aim to minimize the overall service latency of all IIoTDs consisting of task computation latency and offloading latency, by joint optimizing the task offloading decisions, charging resources allocation, connection management, and UAV computation resources allocation. However, the formulated optimization problem is a mixed-integer nonlinear programming (MINLP) problem which is challenging to solve in general. In order to address the problem, we decompose it into multiple convex sub-problems based on block-coordinate descent (BCD) method to obtain the optimal solution. Performance evaluation demonstrates that our scheme outperforms the existing schemes in terms of the overall service latency of IIoTDs.
\end{abstract}

% Note that keywords are not normally used for peerreview papers.
\begin{IEEEkeywords}
Industrial Internet of Things, Mobile edge computing, Wireless power transmission, Unmanned aerial vehicle, Task offloading.
\end{IEEEkeywords}

% For peer review papers, you can put extra information on the cover
% page as needed:
% \ifCLASSOPTIONpeerreview
% \begin{center} \bfseries EDICS Category: 3-BBND \end{center}
% \fi
%
% For peerreview papers, this IEEEtran command inserts a page break and
% creates the second title. It will be ignored for other modes.
\IEEEpeerreviewmaketitle

\section{Introduction}
% The very first letter is a 2 line initial drop letter followed
% by the rest of the first word in caps.
%
% form to use if the first word consists of a single letter:
% \IEEEPARstart{A}{demo} file is ....
%
% form to use if you need the single drop letter followed by
% normal text (unknown if ever used by the IEEE):
% \IEEEPARstart{A}{}demo file is ....
%
% Some journals put the first two words in caps:
% \IEEEPARstart{T}{his demo} file is ....
%
% Here we have the typical use of a "T" for an initial drop letter
% and "HIS" in caps to complete the first word. //er
\IEEEPARstart{T}{he} industrial application of IoT, or IIoT, as a new industrial concept, combines intelligent machines, advanced analysis and machine-human collaboration together, making industrial operation intensely efficient and intelligent[1-2]. With the development of wireless sensor-actuator networks (WSAN), and wireless sensor networks (WSN), more and more IoT devices (IoTDs) are deployed in oil production platforms, underground mines[3], container ports and hydroelectric stations to measure important operational and environmental parameters. IIoT is anticipated to have the capability to transform many industries, including manufacturing, agriculture, engineering industry and energy industry. However, many IIoTDs have limited computing capability and batteries due to their limited size,and is difficult to replace due to the work environment. In order to maintain the quality of service (QoS), it is necessary to assist these kinds of IIoTDs in processing data.

As for the above problems, PEC and Wireless Power Transfer (WPT) are recognized as the feasible solutions. Faced with the data computation, PEC is a emerging computing paradigm with great potential to enhance the performance of devices by task offloading[4], where data can be processed on the edge of the network[5], with the assistance of intelligent devices.
Faced with the energy supplement, WPT is a technology to realize the vision of IIoT[6], which is designed to provide a stable and controllable wireless power[7]. With the Energy Harvesting (EH), the IIoTDs can power themselves by harvesting the wireless signal. Howerver, due to the limited hardware capacity and propagation loss, the radio frequency (RF) signals over long distances lead to poor performance of the WPT and EH systems.

Thanks to their high mobility, flexible deployment and low cost[8-9], UAVs have been widely used in various scenarios (e.g. search and rescue, cargo delivery, surveillance and monitoring, etc.) as moving relays and flying BSs to enlarge network coverage and enhance the communication quality. Distinguished from the fixed location BSs on the ground, UAVs can not only satisfy different quality-of-service (QoS) requirements, but also more likely to establish the line-of-sight (LoS) links with IIoTDs by adjusting their locations flexibly, which can achieve better communication channels and more reliable transmission quality.

In this paper, we present a global joint resource allocation scheme for UAV service of PEC in IIoTDs system, as illustrated in Fig. 1, in which a moving UAV is deployed as a PEC server and a mobile power source. Specifically, we consider the environments where the terrestrial wireless connection between IIoTDs and ground BSs or APs can't be be established, since there is no such wireless infrastructures exist or have been badly damaged. Besides, the IIoTDs can not perform the sensing tasks while data calculation due to the limitation of hardware. Therefore, a UAV deployed to provide task offloading opportunities and energy supply to IIoTDs in such environments, and the collected data from IIoTDs needs to be execute rapidly. We formulate the system process as a optimization problem aiming at minimizing the sum service latency of all IIoTDs consisting of task computation latency and offloading latency, by joint optimizing the task offloading decisions, charging resources allocation, connection management, and UAV computation resources allocation.

\begin{figure}[t]
\centering
\includegraphics[scale=0.15]{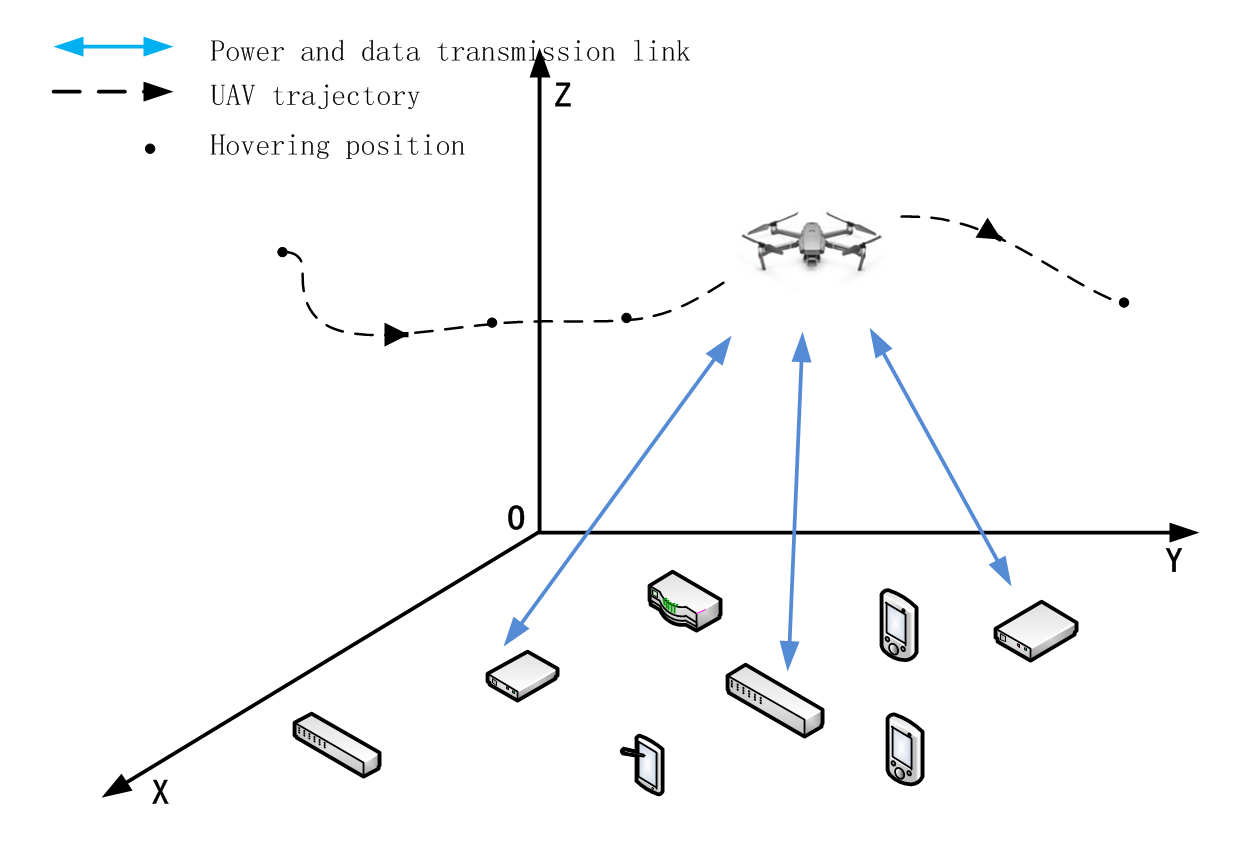}
\centering
\captionsetup{font={small,it}}
\caption{The proposed global joint resource allocation scheme for UAV service of PEC in IIoTDs system}
\label{figl}
\end{figure}
However, the above formulated optimization problem is indeed challenging to tackle. There are two main reasons.On the one hand, there exists correlation among different optimization variables, such as the charging power, the UAV CPU frequency allocation and the variables related to the connection management, making the objective function and constraints non-convex. On the other hand, the variables related to the offloading and connecting decisions are binary, making the problem a mixed integer non-convex optimization problem.

The main contributions of our work are summarized as follows:
1) We propose the collaborative UAV-IIoTDs resource allocation scheme for PEC in IIoTs systems where the UAV is deployed to provide PEC and power-charging services for IIoTDs;
2) Considering the limitation and the QoS requirement of IIoTDs, we formulate the global joint resources allocation as an optimization problem under the proposed system, with the goal of minimizing the sum latency of all IIoTDs;
3) We present an alternating optimization algorithm based on the block-coordinate descent (BCD) method to decouple the optimization variables and develop an heuristic adjusting-approaching algorithm to solve the subproblem relating to the task offloading decisions optimization;
4) To illustrate the performance of the proposed scheme, massive evaluations were conducted. Performance analysis demonstrate that our algorithm can enhance the performance of PEC IIoT systems significantly, compared to several conventional schemes.

The rest of this paper is organized as follows. In Section II, we present the related work. We introduce our system model in Section III. The optimization problem is formulated and solved by the proposed method in Section IV. In Section V, we present our performance analysis. Finally, we conclude the paper and mention the future work in Section VI.

\section{Related Work}
\subsection{Resource Allocation in EC}%8
Extensive efforts have been dedicated on the resource allocation in edge computing (EC) that aims at optimizing operating cost[10-11], system latency[12], energy consumption[13-14] and network throughput[15].
Wang \emph{et al.} in [10] studied the mobility-agnostic online resource allocation by solving the optimization problem of allocation costs, reconfiguration, service quality and migration under unpredictable resource prices and user movement.
Wang \emph{et al.} in [11] proposed a dynamic optimization scheme for the IoT fog computing system with multiple mobilr devices, aiming at minimizing the system cost by joint optimizaing the radio and computational resources and offloading decisions.
Zhao \emph{et al.} in [12] formulated a cloud-MEC collaborative computation offloading problem through jointly optimizing computation offloading decision and computation resource allocation.
In [13], Zhang \emph{et al.} studied the joint optimization of bits allocation, time slot scheduling, power allocation and UAV trajectory design, aiming at minimizing the total energy consumption.
In [14], Yang et al. investigated joint resource allocation and trajectory design in a MEC network where multiple UAVs are deployed to compute users' offloading tasks,aiming at minimizing the sum energy consumption.
Ning \emph{et al.} in [15] put forward a hybrid computation offloading framework for real-time traffic management aiming at maximizing the sum offloading rate by joint optimizing task distribution, sub-channel assignment and power allocation.

All these studies assume users or devices have sufficient batteries to complete the task transmission and execution. However, it is utmost important to take the limited batteries into consider for enhancing the system endurance. In the view of the above consideration, we propose to deploy a UAV in the PEC IIoTs system to assist the IIoTDs task offloading and wireless charging processes.
\subsection{UAV-assisted WPT}
There are a number of studies on UAV-assisted WPT that aims at trajectory design[16-17], trajectory design based on energy optimization[18-19], communication quality optimization[20] and charging resource allocation[21-22]. Yang \emph{et al.} in [16] proposed a genetic algorithm based successive hover-and-fly scheme to design the optimal UAV trajectory with the objective to maximize the minimal received energy among all users under UAV speed constraints. Ku \emph{et al.} [17] applied Q-learning among reinforcement learning techniques to design UAV trajectory in a WPT system where UAV broadcasts power to energy receivers (ERs) on the ground to solve the fairness problem. Beak \emph{et al.} in [18] deployed UAV as a flying data collector and wireless power source in wireless charging sensor networks (WCSNs). The problem of joint optimization of the UAV hovering location and duration under data collection along with UAV energy consumption constraints,aims at maximizing the minimum energy consumption of sensors after data transmission and energy harvesting.  Xie \emph{et al.} in [20] formulated the system throughput maximization problem under two paradigms of delay-tolerant case and delay-sensitive case by joint optimizing the time slot scheduling, power allocation along with UAV trajectory constrained by a so-called neutrality constraints.
Yin \emph{et al.} in [21] studied the sum of download rate maximization problem in a UAV-assisted cellular network where UAVs are powered by a ground wireless charging station,by joint optimization for user association,resource allocation and station placement.
Chen \emph{et al.} in [22] presented an investigation on the optimal the overall power transmission efficiency considering the UAV's trajectory along with the power of the charging, where the UAV is deployed to collect data reliably form a group of sensors.

To the best of our knowledge, the resource allocation for UAV service of PEC in IIoTs systems has not been well investigated. Therefore, a new model is required in such systems, which is discussed in the following section.

\begin{table}[t]	
\centering
\caption{LIST OF SYMBOLS}\label{table:1}
\resizebox{255pt}{160pt}{\begin{tabular}{c||l}
\toprule
\textbf{Parameter} & \textbf{Description} \\
\midrule
N & Number of IIoTDs \\
M & Number of UAV hovering positions \\
${\cal N}$ & Set of IIoTDs \\
${\cal N}_0$ & Set of IIoTDs in local execution mode \\
${\cal N}_1$ & Set of IIoTDs in task offloading mode \\
${\cal M}$ & Set of UAV hovering positions \\
$w_i$ & Horizontal coordinate of IIoTD \emph{i} \\
$q_j$ & Horizontal coordinate of \emph{j}-th hovering position \\
H & Altitude of the UAV \\
${I_i}$ & Computation task of IIoTD \emph{i} \\
${D_i}$ & Data size of task ${I_i}$ \\
${F_i}$ & Number of CPU cycles of task ${I_i}$ \\
${L_i}$ & Total service latency of task ${I_i}$ \\
$d_i[j]$ & Distance between \emph{j}-th hovering position and IIoTD \emph{i} \\
$a_i[j]$ & Connection status indicator \\
${\Lambda _i}[j]$ & Average pathloss of the IIoTD \emph{i} at \emph{j}-th position \\
${r_i}[j]$ & Channel power gain of the IIoTD \emph{i} at \emph{j}-th position \\
${r_i}[j]$ & Offloading transmission rate of the IIoTD \emph{i} at \emph{j}-th position \\
${p_i}[j]$ & Charging power allocated to the IIoTD \emph{i} at \emph{j}-th position \\
${f_i}^l$ & On-chip computing capability of the IIoTD \emph{i} \\
${f_i}^o[j]$ & Computing resources allocated to the IIoTD \emph{i} at \emph{j}-th position \\
$E_i^{eh}$ & Total harvesting energy of IIoTD \emph{i} \\
$E_i^{l}$ & Total local computation energy consumption of IIoTD \emph{i} \\
$E_i^{tr}$ & Total task offloading energy consumption of IIoTD \emph{i} \\
$T_i^{eh}$ & Total energy harvesting latency of IIoTD \emph{i} \\
$T_i^{l}$ & Task local computation latency of IIoTD \emph{i} \\
$T_i^{tr}$ & Total task offloading transmission latency of IIoTD \emph{i} \\
$T_i^{o}$ & Total task offloading computation latency of IIoTD \emph{i} \\
${\rho _i}$ & Task offloading decision indicator of IIoTD \emph{i} \\
${\eta _0}$ & The energy conservation efficiency \\
${\varphi _i}$ & The effective switched capacitance constant of IIoTD \emph{i} \\
\bottomrule
\end{tabular}}
\end{table}

\section{System Model}
\subsection{Set-Up}
As shown in the Fig.1, we consider a global joint resource allocation scheme for UAV service of PEC in IIoTs system, where a UAV equipped with multiple orthogonal isotropic antennas, is deployed to provide task offloading opportunities and energy supplement to \emph{N} IIoTDs equipped with one single antenna, each of which has an computation-intensive and latency-critical task. The task completion process includes: i) energy transmission and harvesting; ii) task local execution or offloading execution (data migration and assistance computation); iii) result uploading for local execution or beacon post-back for task offloading. We ignore the latency of step iii) due to the small amount of data. We assumed that each IIoTD has an individual computation-intensive and latency-critical task. It is also assumed that the UAV can perform energy transmitting and offloading computing while IIoTDs can perform energy harvesting and local computing or task offloading. For data transmission, in order to avoid interference among IIoTDs, we consider the orthogonal frequency division multiplexing (OFDM) scheme. The main sysbols mentioned in the paper are summarized in \textbf{Table I}.

Without loss of generality, a three-dimensional (3D) Euclidean coordinate is adopted, whose coordinates are measured in meters, and all the devices in the wireless IIoT system are distributed in the first quadrant. We assume that there are a total of \emph{N} IIoTDs randomly distributed in the area and locations of all the IIoTDs are fixed on the ground with zero altitude, with \emph{$w_i$ = ($x_i$,$y_i$)} representing the location of IIoTD \emph{i}, where \emph{i} $\in$ $\mathcal{N}$ and \emph{$\mathcal{N}$ = \{1,2,...,N\}}. Denote \emph{$D_i$} as the amount of transmission data and \emph{$F_i$} as the required processing CPU cycles for task \emph{i}. Thus,we can express the task of IIoTD \emph{i} as:
\begin{equation}
\label{E1}
{I_i}{\rm{ = }}({D_i},{F_i}),\quad \forall i \in {\cal N},
\end{equation}

In addition, we assume that the UAV flies above the area at a fixed altitude \emph{H} and hovers at \emph{M} given locations, with \emph{$q_j$ = ($X_j$,$Y_j$)} representing the location of UAV$'$s \emph{j}-th hovering position, where \emph{j $\in$ $\mathcal{M}$} and \emph{$\mathcal{M}$ = \{1,2,...,M\}}. Therefore, at the hovering position \emph{j}, the distance between UAV and IIoTD \emph{i} is shown as:
\begin{equation}
\label{E2}
{d_i}[j] = \sqrt {{{({X_j} - {x_i})}^2} + {{({Y_j} - {y_i})}^2} + {H^2}},
\end{equation}

Assume each IIoTD can select one and only one UAV hovering position to harvest energy and offload its data, while UAV can serve more than one IIoTD at each hovering position. Therefore, we define binary variables $a_i[j]$ to indicate the connection status between UAV and IIoTDs, where $a_i[j]$ = 1 means the IIoTD \emph{i} chooses the \emph{j}-th UAV hovering position to harvest energy and offload data; otherwise, $a_i[j]$ = 0. It yields the following constraints:
\begin{equation}
\label{E3}
\sum\limits_{j \in {\cal M}}^{} {{a_i}[j] = 1} ,\quad\forall i \in {\cal N},
\end{equation}
\begin{equation}
\label{E4}
{a_i}[j] = \{ 0,1\} ,\quad{\rm{ }}\forall i \in {\cal N},\forall j \in {\cal M},
\end{equation}

\subsection{Channel Model}
In the UAV-enabled wireless powered mobile edge network, we consider the effect of the environment on the occurrence of LoS and an air-to-ground propagation model in suburban environment proposed in [23-25]. In hovering position \emph{j}, the LoS and NLoS pathloss between UAV and IIoTD \emph{i} is given by:
\begin{equation}
\label{E5}
P{L_{LoS,i}}[j] = L_{FS} + 20\log ({d_i}[j]) + {\eta _{LoS}},
\end{equation}
\begin{equation}
\label{E6}
P{L_{NLoS,i}}[j] = L_{FS} + 20\log ({d_i}[j]) + {\eta _{NLoS}},
\end{equation}
where $L_{FS}$ denotes the free space pathloss given by $L_{FS} = 20\log (f) + 20\log (\frac{{4\pi }}{c})$, and  \emph{f} is the system carrier frequency. ${\eta _{LoS}}$ and ${\eta _{NLoS}}$ represent the additional attenuation factors in cases of the LoS and NLoS connections respectively.

The probability of LoS connection is given by:
\begin{equation}
\label{E7}
{P_{LoS,i}}[j] = \frac{1}{{1 + a \cdot \exp ( - b({\theta_i}[j] - a))}},
\end{equation}
where \emph{a} and \emph{b} are constants depending on the environment and ${\theta_i}[j]$ denoted the elevation angle given by ${\theta_i}[j] = {\arctan (\frac{H}{{{d_i}[j]}})}$.

The average pathloss of the IIoTD \emph{i} at \emph{j}-th hovering position is given by:
\begin{equation}
\label{E8}
{\Lambda _i}[j]= {P_{LoS,i}}[j] \cdot P{L_{LoS,i}}[j] + (1 - {P_{LoS,i}}[j]) \cdot P{L_{NLoS,i}}[j],
\end{equation}

We define the \emph{B} as the channel bandwidth and ${p_i}$ as the transmitting power of IIoTD \emph{i}, along with the ${\sigma ^2}$ as the noise power. Then, the transmission rate of IIoTD \emph{i} at \emph{j}-th hovering position is given by[25]:
\begin{equation}
\label{E9}
{r_i}[j] = B{\log _2}(1 + \frac{{{p_i}}}{{{\sigma ^2}{{10}^{{\Lambda _i}[j] /10}}}}),\quad{\rm{ }}\forall i \in {\cal N},\forall j \in {\cal M},
\end{equation}
\subsection{Wireless Energy Harvesting Model}
In the proposed system,the energy consumption of the IIoTDs for local computing and task offloading all comes from the harvested energy. Similar to the [26-27], we applied the linear energy harvesting model in this paper. Thus,the energy harvested by IIoTD \emph{i} at \emph{j}-th the hovering position is given as:
\begin{equation}
\label{E10}
E_i^{eh}[j] = {\eta _0}\sum\limits_{j \in {\cal M}} {{a_i}[j]{p_i}[j]{g_i}[j]T_i^{eh}[j]},
\end{equation}
where ${g_i}[j] = \frac{{{g_0}}}{{{d_i}[j]}}$ is the channel power gain of the IIoTD \emph{i} at \emph{j}-th the hovering position, ${g_0}$ represents the received power at the reference distance ${d_0}$ = 1 m. And ${\eta _0} \in$ (0,1] denotes the energy conservation efficiency, ${p_i}[j]$ denotes the charging power allocated to the IIoTD \emph{i} at \emph{j}-th hovering position and $T_i^{eh}[j]$ denotes the corresponding energy harvesting time.
\subsection{Working Pattern Model}
As mentioned above, each IIoTD can choose computing its task locally, which is the local execution mode, or offloading task to the UAV, which is the task offloading mode. Thus, if IIoTD \emph{i} choose the local execution mode, it will allocate the frequency  $f_{\rm{i}}^l$ for its own task data processing. On the contrary, if IIoTD \emph{i} choose the task offloading mode and offload its task at the \emph{j}-th hovering position, the UAV  will allocate the frequency $f_{\rm{i}}^o[j]$ for the task ${I_i}$ data processing.

In order to distinguishing the two working pattern of IIoTDs preferably, we denote ${{\cal N}_0}$ and ${{\cal N}_1}$ as the set of IIoTDs choosing computing locally and offloading task, respectively. Therefore, ${\cal N} = {{\cal N}_0} \bigcup {{\cal N}_1}$ and ${{\cal N}_0} \bigcap {{\cal N}_1} = \oslash$, where $\oslash$ denotes the null set.

\subsubsection{Local Execution Mode}
For the local execution mode, the computational task of IIoTDs are performed locally. The local execution time is given as:
\begin{equation}
\label{E11}
T_i^l{\rm{ = }}\frac{{{F_i}}}{{f_i^l}},
\end{equation}

The corresponding energy consumption is given as:
\begin{equation}
\label{E12}
E_i^l[j]{\rm{ = }}{\varphi _i}{a_i}[j]{(f_i^l)^v}T_i^l{\rm{ = }}{\varphi _i}{F_i}{a_i}[j]{(f_i^l)^{v{\rm{ - }}1}},
\end{equation}

\noindent where ${{\varphi _i} \ge 0}$ denotes the effective switched capacitance of IIoTD \emph{i} and $v$ denotes the positive constant.

At \emph{j}-th hovering position, the local computing energy consumption of IIoTD \emph{i} should not be more than the total harvesting energy. Thus, one can have:
\begin{equation}
\label{E13}
E_i^l[j] \le E_i^{eh}[j],\quad{\rm{ }}{\rm{ }}\forall i \in {{\cal N}_0},
\end{equation}
\subsubsection{Task Offloading Mode}
For the task offloading mode, IIoTDs will offload their task to the UAV. Based on the channel model mentioned above, the transmission delay and energy consumption for IIoTD \emph{i}$'$s task offloading at the \emph{j}-th hovering position are given as:
\begin{equation}
\label{E14}
T_i^{tr}[j] = \frac{{{I_i}}}{{{r_i}[j]}},
\end{equation}
and
\begin{equation}
\label{E15}
E_i^{tr}[j] = {a_i}[j]{p_i}T_i^{tr}[j] = {a_i}[j]{p_i}\frac{{{I_i}}}{{{r_i}[j]}},
\end{equation}

The task computation time on the UAV is given as:
\begin{equation}
\label{E16}
T_i^o[j] = \frac{{{F_i}}}{{f_i^o[j]}},
\end{equation}

At \emph{j}-th hovering position, the task offloading energy consumption of IIoTD \emph{i} should not be more than the total harvesting energy. Thus, one can have:
\begin{equation}
\label{E17}
E_i^{tr}[j] \le E_i^{eh}[j],\quad{\rm{ }}{\rm{ }}\forall i \in {{\cal N}_1},
\end{equation}
\section{Our Proposed GJRA Scheme}
In order to specify the service delay of IIoTDs, we make following assumptions: (i)IIoTDs cannot execute or offload its task until completing the energy harvesting; (ii)the UAV cannot computing a task until receiving its entire data.
%(iii)the data collection time after task local execution is not taken into account due to %the small data quantity after the tasks completion.
 Therefore, the service latency of IIoTD \emph{i} is given as:
\begin{equation}\label{E18}
{L_i}{\rm{ = }}\left\{ {\begin{array}{*{20}{c}}
{\sum\limits_{j \in {\cal M}} {{a_i}[j](T_i^{eh}[j]{\rm{ + }}T_i^l)} {\rm{                     }},\ {\rm{ }}\forall i \in {{\cal N}_0}}\\
{\sum\limits_{j \in {\cal M}} {{a_i}[j](T_i^{eh}[j] + T_i^{tr}[j] + T_i^o[j])} {\rm{   }},\ {\rm{ }}\forall i \in {{\cal N}_1}}
\end{array}} \right.
\end{equation}

 Assume that the locations of IIoTDs and the UAV's hovering positions are fixed and known[28]. Let ${\bf{A}}  =  \{ {a_i}[j],\forall i \in {\cal N},\forall j \in {\cal M}\}$, ${{\bf{F}}^{\bf{o}}}  =  \{ f_i^o[j],\forall i \in {\cal N},\forall j \in {\cal M}\}$, ${\bf{P}}  =  \{ {p_i}[j],\forall i \in {\cal N},\forall j \in {\cal M}\} $. Our problem becomes to the joint optimization of the task offloading decisions(i.e., ${{\cal N}_0}$ and ${{\cal N}_1}$), the IIoTD connection management(i.e., ${\bf{A}}$), the charging resources allocation(i.e., ${\bf{P}}$) and the UAV computation resources allocation(i.e., ${{\bf{F}}^{\bf{o}}}$), with the goal of minimizing the overall service delay of all IIoTDs. Then, it can be formulated as the following optimization problem:
\begin{subequations}  \label{eq:19}%
%\begin{spacing}{1.1}
\begin{align}
\textbf{P1}:&\mathop {\min }\limits_{{\bf{A}},{\bf{P}},{{\bf{F}}^{\bf{o}}},{{\cal N}_0},{{\cal N}_1}} {\rm{   }}\sum\limits_{i \in {\cal N}} {\sum\limits_{j \in {\cal M}} {{a_i}[j]T_i^{eh}[j]} }  + \sum\limits_{i \in {{\cal N}_0}} {\sum\limits_{j \in {\cal M}} {{a_i}[j]\frac{{{F_i}}}{{f_i^l}}} }, \notag \\
 &\qquad \qquad + \sum\limits_{i \in {{\cal N}_1}} {\sum\limits_{j \in {\cal M}} {{a_i}[j]\left( {\frac{{{F_i}}}{{f_i^o[j]}} + T_i^{tr}[j]} \right)} }              \label{eq:19A} \\
s.t.\quad &{k_i}{F_i}{(f_i^l)^2} \le {\eta _0}{p_i}[j]{g_i}[j]T_i^{eh}[j],{\rm{ }}\forall i \in {{\cal N}_0},{\rm{ }}\forall j \in {\cal M},           \label{eq:19B} \\
&{p_i}T_i^{tr}[j] \le {\eta _0}{p_i}[j]{g_i}[j]T_i^{eh}[j],\ \ \forall i \in {{\cal N}_0},{\rm{ }}\forall j \in {\cal M}, \label{eq:19C}\\
&\sum\limits_{i \in {{\cal N}_1}} {{a_i}[j]f_i^o[j]}  \le f_{\max }^{uav}, \qquad \quad \qquad\forall j \in {\cal M},
\label{eq:19D}\\
&f_i^o[j] \ge 0,\ \ \qquad \qquad \qquad\forall i \in {{\cal N}_1},\forall j \in {\cal M},  \label{eq:19E}\\
&\sum\limits_{i \in {\cal N}} {{a_i}[j]{p_i}[j]}  \le p_{\max }^{uav},\quad \qquad \ \qquad\forall j \in {\cal M},\label{eq:19F}\\\vspace{1ex}
&{p_i}[j] \ge 0,\quad \qquad \qquad \qquad\forall i \in {\cal N},\forall j \in {\cal M},\label{eq:19GH}\\
&\sum\limits_{j \in {\cal M}}^{} {{a_i}[j] = 1} ,\qquad\qquad\qquad\qquad\ \ \forall i \in {\cal N}, \label{eq:19H}\\
&{a_i}[j] = \{ 0,1\} ,\quad\qquad\qquad\forall i \in {\cal N},\forall j \in {\cal M},\label{eq:19I}\\
&{\cal N} = {{\cal N}_0} \cup {{\cal N}_1},\quad{{\cal N}_0} \cap {{\cal N}_1} = \emptyset .\label{eq:19J}%
\end{align}
%\end{spacing}
\end{subequations}
where $f_{\max }^{uav}$ denotes the maximum computing frequency of the UAV while $f_{i,\max }^{ue}$ denotes the maximum computing frequency of the IIoTD \emph{i}, and $p_{\max }^{uav}$ denotes the maximum charing power of the UAV. $(19b)$ and $(19c)$ represent the energy consumption should not be more than the harvesting energy for each IIoTD choosing either local execution mode or task offloading mode, respectively. $(19d)$ means the computation resources allocated to all IIoTDs in task offloading mode cannot exceed the total computation capability of the UAV. $(19e)$ guarantees that the offloading computation resources allocated to each IIoTD is non-negative. Similarly, $(19f)$ means the charing power allocated to all IIoTDs cannot exceed the total wireless power capability of the UAV. $(19g)$ guarantees that the charing power allocated to each IIoTD is nonnegative. $(19g)$ and $(19h)$ represent that all of the IIoTDs can select one and only one UAV hovering position to connect to the UAV. $(19i)$ is the task offloading decision constraint. \textbf{P1} is a MINLP problem, which is NP-hard and difficult to be optimally solved in general.

To solve the formulated problem \textbf{P1}, we obtain the approximate optimal solution for each variable in problem \textbf{P1} by the BCD method. Based on it, we proposed an overall optimization algorithm to get an approximation solution of the formulated problem \textbf{P1}. The details of the proposed algorithm are presented as follows.
\subsection{Task Offloading Decisions Optimization}
In order to efficiently solve \textbf{P1}, a binary variable denoted by ${\rho _i}$ is introduced, where ${\rho _i} \in \{ 0,1\}$ and ${\bf{\rho }} {\rm{ = \{ }}{\rho _i}{\rm{,}}\forall i \in {\cal N}{\rm{\} }}$. ${\rho _i}$ = 0 means that the IIoTD \emph{i} performs local execution mode while ${\rho _i}$ = 1 means that the IIoTD \emph{i} performs task offloading mode. Moreover, the task offloading decision indicator variable ${\rho _i}$ is relaxed as a sharing factor ${\rho _i} \in [0,1]$. Further, we can combine the constraints $(19b)$ to $(19c)$. Thus, \textbf{P1} can be rewritten as follow:
\begin{subequations}  \label{eq:20}
%\begin{spacing}{1}
\begin{align}
\textbf{P2}:&\mathop {\min }\limits_{\boldsymbol{\rho },{\bf{A}},{\bf{P}},{{\bf{F}}^{\bf{o}}}} {\rm{   }}\sum\limits_{i \in {\cal N}} {\sum\limits_{j \in {\cal M}} {{a_i}[j]\left\{ \begin{array}{l}
T_i^{eh}[j] + (1 - {\rho _i})\frac{{{F_i}}}{{f_i^l}}\\
 + {\rho _i}\left( {\frac{{{F_i}}}{{f_i^o[j]}} + T_i^{tr}[j]} \right)
\end{array} \right\}} } ,               \label{eq:20A} \\
s.t.\quad &(1 - {\rho _i}){k_i}{F_i}{(f_i^l)^2} + {\rho _i}{p_i}T_i^{tr}[j]\le {\eta _0}{p_i}[j]{g_i}[j]T_i^{eh}[j],\notag\\
&\qquad\qquad\qquad\qquad\qquad\qquad\forall i \in {\cal N},\forall j \in {\cal M}, \label{eq:20B} \\
&\sum\limits_{i \in {\cal N}} {{\rho _i}{a_i}[j]f_i^o[j]}  \le f_{\max }^{uav},\quad\qquad\qquad\forall j \in {\cal M},\label{eq:20C} \\
&(19e),(19f),(19g),(19h),(19i)  \notag
\end{align}
%\end{spacing}
\end{subequations}

Given ${\bf{A}}$, ${\bf{P}}$ and ${{\bf{F}}^{\bf{o}}}$, the subproblem of task offloading decisions optimization can be given as:
\begin{subequations}  \label{eq:21}
%\begin{spacing}{1}
\begin{align}
\textbf{P2.1}:&\mathop {\min }\limits_{\boldsymbol{\rho }} {\rm{   }}\sum\limits_{i \in {\cal N}} {\sum\limits_{j \in {\cal M}} {{a_i}[j]\left\{
 (1 - {\rho _i})\frac{{{F_i}}}{{f_i^l}} + {\rho _i}\left( {\frac{{{F_i}}}{{f_i^o[j]}} + T_i^{tr}[j]} \right)
 \right\}} } ,               \label{eq:21A} \\
s.t.\quad &(1 - {\rho _i}){k_i}{F_i}{(f_i^l)^2} + {\rho _i}{p_i}T_i^{tr}[j]\le {\eta _0}{p_i}[j]{g_i}[j]T_i^{eh}[j],\notag\\
&\qquad\qquad\qquad\qquad\qquad\forall i \in {\cal N},\forall j \in {\cal M}, \label{eq:21B} \\
&\sum\limits_{i \in {\cal N}} {{\rho _i}{a_i}[j]f_i^o[j]}  \le f_{\max }^{uav},\qquad\forall j \in {\cal M},\label{eq:21C}
\end{align}
%\end{spacing}
\end{subequations}

To achieve the goal of minimizing the overall service latency, the IIoTD will make its offloading decision based on the tradeoff between the achievable minimum task completion time and the necessary resources consumption. Hence, for any given ${\bf{A}}$, ${{\bf{F}}^{\bf{o}}}$ and ${\bf{P}}$, the IIoTD offloading decision scheme depends on the achievable minimum sum service latency.

In this case, while the IIoTD \emph{i} performs local execution mode, the minimum service latency of task \emph{i} can be obtained by constraint $(21b)$ as $\sum\limits_{j \in M} {{a_i}[j]\frac{{k_i}{F_i}{(f_i^l)^2}}{{\eta _0}{p_i}[j]{g_i}[j]}+ T_i^l}$. And similarly, while the IIoTD \emph{i} performs task offloading mode, the minimum service latency of task \emph{i} can be expressed as $\sum\limits_{j \in M} {{a_i}[j] \left( \frac{{p_i}T_i^{tr}[j]}{{\eta _0}{p_i}[j]{g_i}[j]}+ T_i^{tr}[j] + T_i^o[j] \right) }$ by constraint $(21b)$.

Therefore, the partial optimal IIoTD task offloading decision scheme constrained by $(21b)$ can be obtained by

\begin{equation} \label{eq:22}
{\rho _i}^{p\_opt}{\rm{ = }}\left\{ {\begin{array}{*{20}{c}}
{1 \ \ \ \quad   if \ {h_{i,1}} \ge {{h_{i,2}}}}\\
{{0  \qquad \quad     otherwise;}}
\end{array}} \right.
\end{equation}
\begin{equation} \label{eq:23}
h_{i,1} = \sum\limits_{j \in M} {{a_i}[j]\frac{{k_i}{F_i}{(f_i^l)^2}}{{\eta _0}{p_i}[j]{g_i}[j]}+ \frac{{{F_i}}}{{f_i^l}}} ,
\end{equation}
\begin{equation} \label{eq:24}
h_{i,2} = \sum\limits_{j \in M} {{a_i}[j] \left( \frac{{p_i}T_i^{tr}[j]}{{\eta _0}{p_i}[j]{g_i}[j]} + \frac{{F_i}}{{f_i^o[j]}} + T_i^{tr}[j] \right) } ,
\end{equation}
where $h_{i,1}$ is denoted as the service latency of task \emph{i} in local execution mode and $h_{i,2}$ is denoted as the service latency of task \emph{i} in task offloading mode.

Considering the existence of constraint $(21c)$, we developed an heuristic method to find out the optimal solution ${\rho _i}^{opt}$ of problem \textbf{P2.1} by continuously adjusted based on the partial optimal scheme obtained above. The proposed heuristic adjusting-approaching method is specified as \textbf{Algorithm 1}.
\begin{algorithm}[t]
  \caption{Heuristic Adjusting-Approaching Method}
  \label{alg1}
  \begin{algorithmic}[1]
    \Require
        Fixed $\boldsymbol{A, F^o, P}$,the partial optimal scheme $\boldsymbol{\rho^{p\_opt}}$.
    \Ensure
        The optimal solution $\boldsymbol{\rho^{opt}}$
    \State Initialize the current solution $\boldsymbol{\rho^{current}}$ = $\boldsymbol{\rho^{p\_opt}}$;
    \For {$j = 1 \to M$}
    \If {(21c) is not satisfied for $j$-th position according to $\boldsymbol{\rho^{current}}$}
    \Repeat
    \State Obtain $\boldsymbol{H_1} \gets \{h_{i,1} | \forall i \in {\cal N} \}$, $\boldsymbol{H_2} \gets \{h_{i,2} | \forall i \in {\cal N} \}$
    \State Obtain $\boldsymbol{\triangle H_{1-2}} \gets \{h_{i,1} - h_{i,2} | \forall i \in {\cal N}, h_{i,1} > h_{i,2}\}$
    \State Get $k \gets \arg\min$$\boldsymbol{\triangle H_{1-2}}$
    \State Update $\rho_k \gets 0$
    \Until (21c) is satisfied for $j$-th position
    \State Update $\boldsymbol{\rho^{current}}$
    \EndIf
    \EndFor
    \State $\boldsymbol{\rho^{opt}}$ = $\boldsymbol{\rho^{current}}$ \\
    \Return $\boldsymbol{\rho^{opt}}$
  \end{algorithmic}
\end{algorithm}
\subsection{UAV Computing Resource Allocation Optimization}
Given ${\bf{A}}$, ${\bf{P}} $, $\boldsymbol{\rho }$. The sub-problem with regard to ${{\bf{F}}^{\bf{o}}}$ is:
\begin{subequations}  \label{eq:29}
%\begin{spacing}{1.2}
\begin{align}
\textbf{P2.2}:&\mathop{\min }\limits_{{\bf{F}}^{\bf{o}}} \sum\limits_{i \in {\cal N}} {\sum\limits_{j \in {\cal M}} {{a_i}[j]{\rho _i}\frac{{{F_i}}}{{f_i^o[j]}}} } ,                \tag{23} \\
s.t.\quad &(20c), (19e)  \notag
\end{align}
%\end{spacing}
\end{subequations}

Problem \textbf{P2.2} is a convex problem. Thus, \textbf{P2.2} can be solved by the convex optimization technique such as the interior-point method[29]. To gain more insights on the structure of the optimal solution, we leverage the Lagrange method to obtain a well-structured solution. The Lagrange multipliers associated with the constraints in $(20c)$ is given as $\boldsymbol{\mu} = {\{ {\mu  _j}\geq0\} _{j \in {\cal M}}}$.  The partial Lagrangian function of \textbf{P2.2} is
\begin{equation}\label{eq:30}
\begin{split}
{\cal L}(\boldsymbol{F^o, \mu}) = & \sum\limits_{i \in {\cal N}} {\sum\limits_{j \in {\cal M}} {{a_i}[j]{\rho _i}\frac{{{F_i}}}{{f_i^o[j]}}} } \\
& + \sum\limits_{j \in {\cal M}} {{\mu _j}\left( {\sum\limits_{i \in {\cal N}} {{\rho _i}{a_i}[j]f_i^o[j]}  - f_{\max }^{uav}} \right)},
\end{split}
\end{equation}

The dual function of \textbf{P2.2} is given as
\begin{equation}\label{eq:31}
\begin{split}
g(\boldsymbol{\mu}) = & \mathop {\min }\limits_{\bf{F^o}} {\cal L}(\boldsymbol{F^o, \mu}) , \\
s.t. \quad &(19e)
\end{split}
\end{equation}

Then the dual problem of \textbf{P2.2} is given as
\begin{equation}\label{eq:25}
\begin{split}
&\mathop {\max }  \limits_{\boldsymbol{\mu}} \quad g(\boldsymbol{\mu}) = \mathop {\min }\limits_{\bf{F^o}} {\cal L}(\boldsymbol{F^o, \mu}),\\
&\quad s.t.\quad {\mu _j} \ge 0,\forall j \in {\cal M},
\end{split}
\end{equation}

Since the convex problem \textbf{P2.2} satisfies the Slater's condition, strong duality holds between problem \textbf{P2.2} and problem (28). Therefore, one can solve problem \textbf{P2.2} by equivalently solving its dual problem (28).
\subsubsection{Derivation of the Dual Function $g(\boldsymbol{\mu} )$}
Given any $\boldsymbol{\mu}$, we can obtain $g(\boldsymbol{\mu} )$ by solving problem (27). Notice that problem (27) can be decomposed into the following ${\rm{N}} \times {\rm{M}}$ subproblems:
\begin{equation}\label{eq:33}
\begin{split}
\mathop {\min }\limits_{\bf{F^o}} \quad & {a_i}[j]{\rho _i}\frac{{{F_i}}}{{f_i^o[j]}}{\rm{ + }}{\mu _j}{\rho _i}{a_i}[j]f_i^o[j], \\
s.t. \quad &(19e)
\end{split}
\end{equation}
According to to monotonicity of objective function, the optimal solution of problem (27) is given as
\begin{equation}\label{eq:34}
\vspace{1ex}
{\left( {f_i^o[j]} \right)^{opt}} = \left\{ {\begin{array}{*{20}{c}}
{\sqrt {\frac{{{F_i}}}{{{\mu _j}}},} }&{{\mu _j} > 0,}\\
{f_{\max }^{uav},}&{{\mu _j} = 0.}
\end{array}} \right.
\end{equation}
\subsubsection{Obtaining $\boldsymbol{\mu ^{opt}}$ to Maximize $g(\boldsymbol{\mu})$}
Solving dual problem (28) means obtaining $\boldsymbol{\mu ^{opt}}$ in the defined domain to maximize $g(\boldsymbol{\mu} )$. Putting eq.29 into problem (28), thus we can obtain:
\begin{subequations}  \label{eq:35}
%\begin{spacing}{1.2}
\begin{align}
\mathop {\max }\limits_{\boldsymbol{\mu}} & {\rm{    }}\sum\limits_{j \in {\cal M}} {\left[ {\sum\limits_{i \in {\cal N}} {2{a_i}[j]{\rho _i}\sqrt {{F_i}{\mu _j}} }  - {\mu _j}f_{\max }^{uav}} \right]},                   \label{eq:35A} \\
s.t.\quad&{\rm{ }}{\mu _j} > 0,\quad \forall j \in {\cal M},  \label{eq:35B}
\end{align}
%\end{spacing}
\end{subequations}

Notice that problem (31) can be decomposed into the following M subproblems, one can have:

\begin{equation}\label{eq:36}
\begin{split}
\mathop {\max }\limits_{\boldsymbol{\mu}} \quad & {\rm{    }} { {\sum\limits_{i \in {\cal N}} {2{a_i}[j]{\rho _i}\sqrt {{F_i}{\mu _j}} }  - {\mu _j}f_{\max }^{uav}} },  \\
s.t.\quad&{\rm{ }}{\mu _j} > 0
\end{split}
\end{equation}

According to the monotonicity of the objective function, one can have:
\begin{equation}\label{eq:37}
{\mu _j}^{\rm{opt}}{\rm{ = }}{\left( {\frac{{\sum\limits_{i \in {\cal N}} {{a_i}[j]{\rho _i}\sqrt {{F_i}} } }}{{f_{\max }^{uav}}}} \right)^2} = \frac{{\sum\limits_{i \in {\cal N}} {{a_i}[j]{\rho _i}Fi} }}{{f{{_{\max }^{uav}}^2}}},
\end{equation}

Therefore, the optimal solution to ${\left( {f_i^o[j]} \right)^{opt}}$ can be obtained by
\begin{equation}\label{eq:38}
\begin{split}
{\left( {f_i^o[j]} \right)^{opt}} &= \mathop {\arg \max }\limits_{F^o,\mu } g({\left( {f_i^o[j]} \right)^{opt}},{\mu _j}^{\rm{opt}}) \\
& = \left\{ {\begin{array}{*{20}{c}}
{f_{\max }^{uav}\sqrt {\frac{{{F_i}}}{{\sum\limits_{i \in N} {{a_i}[j]{\rho _i}Fi} }}} }&{{\mu _j} > 0,}\\
%\vspace{1ex}
{f_{\max }^{uav},}&{{\mu _j} = 0.}
\end{array}} \right. \\
\end{split}
\end{equation}
\subsection{UAV Charging Power Optimization}
With given ${\bf{A}}$, ${{\bf{F}}^{\bf{o}}}$, $\boldsymbol{\rho }$, the sub-problem on optimizing ${\bf{P}} $ is:\begin{subequations}  \label{eq:25}
%\begin{spacing}{1.2}
\begin{align}
\textbf{P2.3}:&\mathop {\min }\limits_{{\bf{P}}} {\rm{   }}\sum\limits_{i \in {\cal N}} {\sum\limits_{j \in {\cal M}} {{a_i}[j]T_i^{eh}[j]} },                \label{eq:25A} \\
s.t.\quad &(1 - {\rho _i}){k_i}{F_i}{(f_i^l)^2} + {\rho _i}{p_i}T_i^{tr}[j]\le {\eta _0}{p_i}[j]{g_i}[j]T_i^{eh}[j],\notag\\
&\qquad\qquad\qquad\qquad\qquad\qquad\forall i \in {\cal N},\forall j \in {\cal M},          \label{eq:25B} \\
&(19f),(19g) \notag
\end{align}
%\end{spacing}
\end{subequations}
\begin{lemma} \label{lemma1}
For problem \textbf{P2.3}, the equal sign always holds for \eqref{eq:25B}.
\end{lemma}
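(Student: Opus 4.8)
The plan is to prove the lemma by contradiction, exploiting the fact that the objective of \textbf{P2.3} is monotonically nondecreasing in the harvesting durations $T_i^{eh}[j]$ while these durations are bounded only from below. The first thing to make explicit is that each $T_i^{eh}[j]$ is itself a quantity at our disposal --- the energy-harvesting time at position $j$, a component of the charging-resource allocation --- so it may be shrunk whenever this does not violate a constraint. The second step is to locate where $T_i^{eh}[j]$ occurs in \textbf{P2.3}: for a fixed pair $(i,j)$ it appears only in the single objective term $a_i[j]T_i^{eh}[j]$ of \eqref{eq:25A} and on the right-hand side of the energy constraint \eqref{eq:25B}; it is absent from the UAV power-budget constraint \eqref{eq:19F} and plays no role in the nonnegativity requirement $p_i[j]\ge 0$. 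Consequently, decreasing one $T_i^{eh}[j]$ while holding $\mathbf{P}$ and every other harvesting time fixed can only decrease the objective, cannot violate \eqref{eq:19F} or $p_i[j]\ge 0$, and can only relax \eqref{eq:25B} for the remaining pairs.

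Next I would suppose, toward a contradiction, that at an optimal solution there is a pair $(i_0,j_0)$ with $a_{i_0}[j_0]=1$ for which \eqref{eq:25B} is strict, i.e.
\begin{equation*}
(1-\rho_{i_0})k_{i_0}F_{i_0}(f_{i_0}^l)^2+\rho_{i_0}p_{i_0}T_{i_0}^{tr}[j_0]<\eta_0\,p_{i_0}[j_0]\,g_{i_0}[j_0]\,T_{i_0}^{eh}[j_0].
\end{equation*}
Feasibility of \eqref{eq:25B} forces $p_{i_0}[j_0]>0$ here (otherwise the right-hand side vanishes while the left-hand side is positive), so the right-hand side is a strictly increasing, continuous function of $T_{i_0}^{eh}[j_0]$. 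Hence I can replace $T_{i_0}^{eh}[j_0]$ by the strictly smaller value
\begin{equation*}
\widetilde T_{i_0}^{eh}[j_0]=\frac{(1-\rho_{i_0})k_{i_0}F_{i_0}(f_{i_0}^l)^2+\rho_{i_0}p_{i_0}T_{i_0}^{tr}[j_0]}{\eta_0\,p_{i_0}[j_0]\,g_{i_0}[j_0]}>0,
\end{equation*}
for which \eqref{eq:25B} now holds with equality; by the observation of the previous paragraph the modified point is still feasible, yet its objective value has strictly decreased, since the coefficient $a_{i_0}[j_0]=1$ is positive and $\widetilde T_{i_0}^{eh}[j_0]<T_{i_0}^{eh}[j_0]$. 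This contradicts optimality, so \eqref{eq:25B} must be tight at every pair $(i,j)$ with $a_i[j]=1$.

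Finally I would dispose of the degenerate pairs: if $a_i[j]=0$ then the term $a_i[j]T_i^{eh}[j]$ contributes nothing to the objective and, after multiplication by $a_i[j]$, \eqref{eq:25B} is vacuous, so equality may be stipulated there without loss of generality; this gives the statement in full. I do not expect a genuine obstacle in this argument --- it is essentially a one-line monotonicity/exchange argument --- the only points that need a word of justification are that $T_i^{eh}[j]$ is absent from the power-budget and nonnegativity constraints (so that shrinking it preserves feasibility), the clarification that $T_i^{eh}[j]$ is itself an optimization variable, and the $p_i[j]=0$ corner case handled above. The payoff, which I would record at the end of the proof, is operational: the lemma lets one substitute $T_i^{eh}[j]=\bigl[(1-\rho_i)k_iF_i(f_i^l)^2+\rho_i p_i T_i^{tr}[j]\bigr]/(\eta_0 p_i[j] g_i[j])$ directly into \eqref{eq:25A}, reducing \textbf{P2.3} to an optimization over $\mathbf{P}$ alone.
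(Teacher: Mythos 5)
Your proof is correct and rests on the same observation as the paper's own argument: the objective of \textbf{P2.3} is increasing in each $T_i^{eh}[j]$ while \eqref{eq:25B} only bounds that quantity from below, so at an optimum the lower bound must be attained. The paper states this monotonicity claim in one informal paragraph; your version merely formalizes it as an exchange/contradiction argument and supplies the feasibility details the paper leaves implicit (that $T_i^{eh}[j]$ does not appear in the power-budget or nonnegativity constraints, and the $p_i[j]=0$ and $a_i[j]=0$ corner cases).
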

\begin{proof}
\renewcommand{\qedsymbol}

As mentioned above, the service latency for any IIoTD \emph{i} consists of two parts: 1) energy harvesting time $T_i^{eh}$; and 2) task computation time, the local computation time $T_i^l$ or the sum of transmission latency and offloading computation latency $(T_i^{tr}+T_i^o)$.

To achieve the goal of minimizing the overall service latency of all IIoTDs in the system is to minimize the service latency of every IIoTD, which is to minimize the both two parts time consumption mentioned above for every IIoTD. In other words, the optimal energy harvesting time for IIoTD \emph{i}, $T_i^{eh}$, is its lower bound, which is characterized by constraint \eqref{eq:25B}. This thus proves the lemma.
\rightline{ $\blacksquare$}
\end{proof}
Thus, we can obtain:
\begin{subequations}  \label{eq:26}
%\begin{spacing}{1.2}
\begin{align}
\textbf{P2.4}:&\mathop {\min }\limits_{{\bf{P}}} {\rm{   }}\sum\limits_{i \in {\cal N}} {\sum\limits_{j \in {\cal M}} {{a_i}[j]T_i^{eh}[j]} } ,               \label{eq:26A} \\
s.t.\quad&((1 - {\rho _i}){k_i}{F_i}{(f_i^l)^2} = {\eta _0}{p_i}[j]{g_i}[j]T_i^{eh}[j],\notag \\
&\qquad\qquad\qquad\qquad\qquad\qquad\forall i \in {\cal N},\forall j \in {\cal M},          \label{eq:26B} \\
&(19f),(19g) \notag
\end{align}
%\end{spacing}
\end{subequations}

Since the optimal task offloading decisions and ${{\bf{F}}^{\bf{o}}}$ have been obtained above, as well as ${\bf{A}}$ is pre-defined, we can rewrite the \textbf{P2.4} as :

\begin{subequations}  \label{eq:27}
%\begin{spacing}{1.2}
\begin{align}
\textbf{P2.5}:&\mathop{\min }\limits_{{\bf{P}}} {\rm{   }}\sum\limits_{i \in {\cal N}} {\sum\limits_{j \in {\cal M}} {{a_i}[j]\left\{ {\frac{{(1 - {\rho _i}){k_i}{F_i}{{(f_i^l)}^2}}}{{{\eta _0}{p_i}[j]{g_i}[j]}} + \frac{{{\rho _i}{p_i}T_i^{{\rm{tr}}}{\rm{[j]}}}}{{{\eta _0}{p_i}[j]{g_i}[j]}}} \right\}} } ,                 \tag{37} \\
&s.t.\quad(19f),(19g) \notag
\end{align}
%\end{spacing}
\end{subequations}
\begin{figure*}[!t]
\begin{align*} \label{eq:55}
{\cal L}(\boldsymbol{A ,\beta ,\gamma }) = & \sum\limits_{i \in {\cal N}} {\sum\limits_{j \in {\cal M}} {{a_i}[j]\left[ {T_i^{eh}[j] + (1 - {\rho _i})\frac{{{F_i}}}{{f_i^l}} + {\rho _i}\left( {\frac{{{F_i}}}{{f_i^o[j]}} + T_i^{{\rm{tr}}}{\rm{[j]}}} \right)} \right]} } \\
 & + \sum\limits_{j \in {\cal M}}^{} {{\beta _j}(\sum\limits_{i \in {\cal N}} {{\rho _i}{a_i}[j]f_i^o[j]}  - f_{\max }^{uav})} {\rm{ + }}\sum\limits_{j \in {\cal M}}^{} {{\gamma _j}(\sum\limits_{i \in {\cal N}} {{a_i}[j]{p_i}[j]}  - p_{\max }^{uav})}
\tag{50} % 设置公式编号为14
\end{align*}
\hrulefill
\end{figure*}

It can easily proved that problem \textbf{P2.5} is convex. Therefore, as solving problem \textbf{P2.2}, we can leverage the Lagrange method to solve this problem similarly. The Lagrange multipliers associated with the constraints in $(19f)$ is given as $\boldsymbol{\lambda } = {\{ {\lambda _j}\geq0\} _{j \in {\cal M}}}$ and the partial Lagrangian function of \textbf{P2.5} is
\begin{equation}\label{eq:37}
\begin{split}
{\cal L}(\boldsymbol{P,\lambda }) = & \sum\limits_{i \in {\cal N}} {\sum\limits_{j \in {\cal M}} {{a_i}[j]\frac{{(1 - {\rho _i}){k_i}{F_i}{{(f_i^l)}^2} + {\rho _i}{p_i}T_i^{{\rm{tr}}}{\rm{[j]}}}}{{{\eta _0}{p_i}[j]{g_i}[j]}}} } \\
& + \sum\limits_{j \in {\cal M}} {{\lambda _j}\left( {\sum\limits_{i \in {\cal N}} {{a_i}[j]{p_i}[j]}  - p_{\max }^{uav}} \right)},
\end{split}
%\tag{A.1}
\end{equation}

Then the dual function of \textbf{P2.5} is given as
\begin{equation}\label{eq:38}
\begin{split}
g(\boldsymbol{\lambda }) = &\mathop {\min }\limits_{\boldsymbol{P}} {\cal L}(\boldsymbol{P,\lambda }), \\
s.t.\quad &(19g)
\end{split}
%\tag{A.2}
\end{equation}

Thus, the dual problem of \textbf{P2.5} is
\begin{equation}\label{eq:39}
\begin{split}
&\mathop {\max }\limits_{\boldsymbol{\lambda}}  \quad g(\boldsymbol{\lambda }), \\
s.t.\quad &{\rm{ }}{\lambda _j} \ge 0,\forall j \in {\cal M},
\end{split}
%\tag{A.3}
\end{equation}

Strong duality holds between problem \textbf{P2.5} and problem (40) since problem \textbf{P2.5} is convex and it also satisfies the Slater's condition. Therefore, one can solve problem \textbf{P2.5} by equivalently solving its dual problem (40).
\subsubsection{Derivation of the Dual Function $g(\boldsymbol{\lambda })$}
Given any $\boldsymbol{\lambda}$, we can obtain $g(\boldsymbol{\lambda })$ by solving problem (39). Notice that problem (39) can be decomposed into the following ${N \times M}$ subproblems.
\begin{equation}\label{eq:40}
\begin{split}
\mathop {\min }\limits_P {\rm{     }}{a_i}[j]\frac{{(1 - {\rho _i}){k_i}{F_i}{{(f_i^l)}^2} + {\rho _i}{p_i}T_i^{{\rm{tr}}}{\rm{[j]}}}}{{{\eta _0}{g_i}[j]}}\frac{1}{{{p_i}[j]}}{\rm{ + }}{\lambda _j}{a_i}[j]{p_i}[j],
\end{split}
%\tag{A.4}
\end{equation}
Let ${\cal F}({p_i}[j]) = {a_i}[j]\frac{{(1 - {\rho _i}){k_i}{F_i}{{(f_i^l)}^2} + {\rho _i}{p_i}T_i^{{\rm{tr}}}{\rm{[j]}}}}{{{\eta _0}{g_i}[j]}}\frac{1}{{{p_i}[j]}}{\rm{ + }}{\lambda _j}{a_i}[j]{p_i}[j]$.

\begin{equation}\label{eq:41}
\frac{{\partial {\cal F}}}{{\partial {p_i}[j]}} = {\lambda _j}{a_i}[j] - {a_i}[j]{A_i}[j]\frac{1}{{{{\left( {{p_i}[j]} \right)}^2}}},
%\tag{A.5}
\end{equation}
where ${{\rm{A}}_i}{\rm{[j] = }}\frac{{(1 - {\rho _i}){k_i}{F_i}{{(f_i^l)}^2} + {\rho _i}{p_i}T_i^{{\rm{tr}}}{\rm{[j]}}}}{{{\eta _0}{g_i}[j]}}$.

Let $\frac{{\partial {\cal F}}}{{\partial {p_i}[j]}} = 0$, one can have
\begin{equation}\label{eq:42}
{\left( {p_i[j]} \right)^{opt}} = \left\{ {\begin{array}{*{20}{c}}
{\sqrt {\frac{{{A_i}[j]}}{{{\lambda _j}}}}, }&{{\lambda _j} > 0,}\\
\vspace{1ex}
{p_{\max }^{uav},}&{{\lambda _j} = 0.}  % \tag{A.6}
\end{array}} \right.
\end{equation}
%\begin{equation}\label{eq:42}
%{\left( {{p_i}[j]} \right)^*} = \sqrt {\frac{{{A_i}[j]}}{{{\lambda _j}}}}
%\end{equation}
\subsubsection{Obtaining $\boldsymbol{\lambda ^{opt}}$ to Maximize $g(\boldsymbol{\lambda })$}
Solving dual problem (40) means obtaining $\boldsymbol{\lambda ^{opt}}$ in the defined domain to maximize $g(\boldsymbol{\lambda })$. Putting (43) into problem (40), thus we can obtain:
\begin{subequations}  \label{eq:43}
\begin{spacing}{1.2}
\begin{align}
\mathop {\max }\limits_\lambda  {\rm{    }}\sum\limits_{j \in {\cal M}} &{\left[ {\sum\limits_{i \in {\cal N}} {2{a_i}[j]\sqrt {{A_i}[j]{\lambda _j}} }  - {\lambda _j}p_{\max }^{uav}} \right]},               \\
s.t.\quad &{\rm{ }}{\lambda _j} > 0,\quad \forall j \in {\cal M},
\end{align}
\end{spacing}
\end{subequations}

Notice that problem (43) can be decomposed into the following $M$ subproblems.
\begin{equation}\label{eq:44}
\begin{split}
\mathop {\max }\limits_\lambda & {\sum\limits_{i \in {\cal N}} {2{a_i}[j]\sqrt {{A_i}[j]{\lambda _j}} }  - {\lambda _j}p_{\max }^{uav}}  \\
s.t.\quad &{\rm{ }}{\lambda _j} > 0
\end{split}
%\tag{A.8}
\end{equation}

According to the monotonicity of objective function, one can have
\begin{equation}\label{eq:45}
{\lambda _j}^{\rm{opt}}{\rm{ = }}{\left( {\frac{{\sum\limits_{i \in {\cal N}} {{a_i}[j]\sqrt {{A_i}[j]} } }}{{p_{\max }^{uav}}}} \right)^2} = \frac{{\sum\limits_{i \in {\cal N}} {{a_i}[j]{A_i}[j]} }}{{p{{_{\max }^{uav}}^2}}}
%\tag{A.9}
\end{equation}

Therefore, the optimal solution to ${\left( {{p_i}[j]} \right)^{opt}}$ can be obtained by
\begin{equation}\label{eq:44}
\begin{split}
{\left( {{p_i}[j]} \right)^{opt}} &= \mathop {\arg \max }\limits_{P,\lambda } g({\left( {{p_i}[j]} \right)^{opt}},{\lambda _j}^{\rm{opt}}) \\
& = \left\{ {\begin{array}{*{20}{c}}
{p_{\max }^{uav}\sqrt {\frac{{{A_i}[j]}}{{\sum\limits_{i \in N} {{a_i}[j]{A_i}[j]} }}} }&{{\lambda _j} > 0,}\\
{p_{\max }^{uav},}&{{\lambda _j} = 0.}
\end{array}} \right. \\
\end{split}
%\tag{A.10}
\end{equation}

\subsection{IIoTD Connection Management Optimization}
With obtained ${\bf{P}} $ , ${{\bf{F}}^{\bf{o}}}$ and task offloading decision, the sub-problem on optimizing IIoTD connection management ${\bf{A}}$ can be formulated as:
\begin{subequations}  \label{eq:31}
\begin{spacing}{1.3}
\begin{align}
\textbf{P2.6}:&\mathop{\min }\limits_{{\bf{A}}} {\rm{   }}\sum\limits_{i \in {\cal N}} {\sum\limits_{j \in {\cal M}} {{a_i}[j]\left\{ \begin{array}{l}
T_i^{eh}[j] + (1 - {\rho _i})\frac{{{F_i}}}{{f_i^l}}\\
 + {\rho _i}\left( {\frac{{{F_i}}}{{f_i^o[j]}} + T_i^{tr}[j]} \right)
\end{array} \right\}} } \tag{48} \\
s.t.\quad&(20c) ,(19f), (19h), (19i) \notag
\end{align}
\end{spacing}
\end{subequations}
To efficiently solve problem \textbf{P2.6}, the variable ${a_i}[j]$ is relaxed as a sharing factor ${a_i}[j]\in [0,1]$. Thus, \textbf{P2.6} can be transformed as follow:
\begin{subequations}  \label{eq:32}
%\begin{spacing}{1.2}
\begin{align}
\textbf{P2.7}:&\mathop{\min }\limits_{{\bf{A}}} {\rm{   }}\sum\limits_{i \in {\cal N}} {\sum\limits_{j \in {\cal M}} {{a_i}[j]\left\{ \begin{array}{l}
T_i^{eh}[j] + (1 - {\rho _i})\frac{{{F_i}}}{{f_i^l}}\\
 + {\rho _i}\left( {\frac{{{F_i}}}{{f_i^o[j]}} + T_i^{tr}[j]} \right)
\end{array} \right\}} }        ,         \label{eq:32A} \\
s.t.\quad &0 \le {a_i}[j] \le 1,{\rm{ }}\qquad\qquad\qquad\forall i \in {\cal N},\forall j \in {\cal M},  \label{eq:32B} \\
&(20c) ,(19f), (19h)  \notag
\end{align}
%\end{spacing}
\end{subequations}

Obviously, problem \textbf{P2.7} is a convex problem with respect to ${\bf{A}}$, which can be effectively solved via Lagrange method.
Denoting $\boldsymbol{\beta}$  = ${\{ {\beta _j} \ge 0\} _{j \in {\cal M}}}$ and $\boldsymbol{\gamma } = {\{ {\gamma _j} \ge 0\} _{j \in {\cal M}}}$ as the Lagrange multiplier vectors associated with constraints $(19f)$ and $(20b)$, respectively.  Then, the Lagrangian of problem \textbf{P2.7} can be given by (50) at the top of the previous page and the Lagrange dual function of problem \textbf{P2.7} can be presented as:
\begin{equation}\label{eq:56}
\begin{split}
g(\boldsymbol{\beta ,\gamma }) = & \mathop {\min }\limits_{\boldsymbol{A}} {\cal L}(\boldsymbol{A ,\beta ,\gamma }), \\
s.t.& (19h),(49b)
\end{split}
\tag{51}
\end{equation}

Then, the corresponding dual problem is given as:
\begin{equation}\label{eq:57}
\begin{split}
&\mathop {\max }\limits_{\boldsymbol{\beta ,\gamma }} \qquad g(\boldsymbol{\beta ,\gamma }), \\
&s.t.{\beta _j} \ge 0,\ {\gamma _j} \ge 0,\quad \forall j \in {\cal M},
\end{split}
\tag{52}
\end{equation}

\begin{algorithm}[t]
  \caption{Gradient Descent on Lagrange Dual Based Algorithm for IIoTDs Connection Management}
  \label{alg1}
  \begin{algorithmic}[1]
    \Require
        Fixed $\boldsymbol{\rho, F^o, P}$,
        the tolerance of accuracy $\varepsilon_{1}$,the maximum iteration number $K_{max}$.
    \Ensure
        The optimal solution $\boldsymbol{A^{opt}}$
    \State Initialize lagrange multipliers $\{{\beta_{j}}\}_{j\in M},\{{\gamma_{j}}\}_{j\in M}$ and the current iteration number $k = 0$.
    \Repeat
    \State Set $k = k + 1$.
    \State Obtain the optimal IIoTDs connection management $\boldsymbol{A^{opt}}$ according to (53) and (54).
    \State Update lagrange multipliers $\{{\beta_{j}}\}_{j\in M}, \{{\gamma_{j}}\}_{j\in M}$ based on (55) and (56).
    \Until{The difference between consecutive values of the objective function (49a) is under $\varepsilon_{1}$ or $k \geq K_{max}$.}\\
    \Return $\boldsymbol{A^{opt}}$
  \end{algorithmic}
\end{algorithm}

To minimize the objective function in (50) which is a linear combination of ${a_i}[j]$, we can let the connection coefficient corresponding to the UAV with the smallest ${h_i}[j]$ be 1 for any \emph{i}. Therefore, the optimal solution of \textbf{P2.7} is given as:
\begin{equation}\label{eq:33}
{a_i}{[j]^{opt}} = \left\{ {\begin{array}{*{20}{c}}
{1,\qquad if\quad j = \mathop {\arg \max }\limits_{j \in {\cal M}} {h_i}[j];}\\
{0,\qquad\qquad\qquad\quad otherwise.}
\end{array}} \right.
\tag{53}
\end{equation}
where
\begin{equation}\label{eq:34}
\begin{array}{l}
{h_i}[j] =  T_i^{eh}[j] + (1 - {\rho _i})\frac{{{F_i}}}{{f_i^l}} + {\rho _i}\left( {\frac{{{F_i}}}{{f_i^o[j]}} + T_i^{tr}[j]} \right)\\
\hspace{0.5cm}
 + {\beta _j}{\rho _i}f_i^o[j] + {\gamma _j}{p_i}[j],\qquad\forall i \in {\cal N},\forall j \in {\cal M},
\end{array}
\tag{54}
\end{equation}

The values of ${\{ {\beta _j}\} _{j \in {\cal M}}}$ and ${\{ {\gamma _j}\} _{j \in {\cal M}}}$ can be determined by the sub-gradient method[31]. The updating procedure can be given by:\\
\begin{equation}\label{eq:35}
{\beta _j}^{(l + 1)} = {\left[ {{\beta _j}^{(l)} - \omega \left( {\sum\limits_{i \in N} {{\rho _i}{a_i}[j]f_i^o[j]}  - f_{\max }^{uav}} \right)} \right]^ + },
\tag{55}
\end{equation}
\begin{equation}\label{eq:36}
{\gamma _j}^{(l + 1)} = {\left[ {{\gamma _j}^{(l)} - \omega \left( {\sum\limits_{i \in N} {{a_i}[j]{p_i}[j]}  - p_{\max }^{uav}} \right)} \right]^ + },\quad\
\tag{56}
\end{equation}
where ${\left[ x \right]^ + } = \max \{ x,0\} $, and $\omega  > 0$ is a dynamically step-size sequence chosen by the self-adaptive scheme of [30].

The optimal solution of problem \textbf{P2.6} can be obtained via the gradient descent on the Lagrange dual method with zero duality gap, by iteratively optimizing ${a_i}[j]$ in (52) and (53) and updating ${\{ {\beta _j}\} _{j \in {\cal M}}}$ and ${\{ {\gamma _j}\} _{j \in {\cal M}}}$ according (54) and (55).

The gradient descent on the Lagrange dual based algorithm for solving problem \textbf{P2.6} with fixed $\boldsymbol{\rho, F^o, P}$ is given by \textbf{Algorithm 2}. Moreover, the overall iterative algorithm GJRA is given in \textbf{Algorithm 3}.
\begin{algorithm}[t]
  \caption{Overall Algorithm GJAR}
  \label{alg2}
  \begin{algorithmic}[1]
    \Require
        $\boldsymbol{A^0, \rho ^0, (F^o)^0, P^0}$,
        the tolerances of accuracy $\varepsilon_{1}$ and $\varepsilon_{2}$,
        the maximum iteration number $K_{max}$ and $R_{max}$.
    \Ensure
         The optimal Users association $\boldsymbol{A^{opt}}$,
         local computing resources allocation $\boldsymbol{(F^l)^{opt}}$,
         offloading computing resources allocation $\boldsymbol{(F^o)^{opt}}$,
         charging resources allocation $\boldsymbol{P^{opt}}$,
         user operation mode selection indicator $\boldsymbol{\rho^{opt}}$.
    \State Initialize lagrange multipliers $\{{\beta_{j}}\}_{j\in M},\{{\gamma_{j}}\}_{j\in M}$ and the current iteration number $r = 0$.
    \Repeat
    \State Set $r = r + 1$.
    \State For given \{$\boldsymbol{A^{r-1}, (F^o)^{r-1}, P^{r-1}}$\},obtain $\boldsymbol{\rho^r}$ according to (22) and \textbf{Algorithm 1}.
    \State For given \{$\boldsymbol{\rho^r, A^{r-1}, P^{r-1}}$\},obtain $\boldsymbol{(F^o)^r}$ according to (34).
    \State For given \{$\boldsymbol{\rho^r, (F^o)^r, A^{r-1}}$\},obtain $\boldsymbol{P^r}$ according to (47).
    \State For given \{$\boldsymbol{\rho^r, (F^o)^r, P^r}$\},obtain $\boldsymbol{A^r}$ according to (53)-(54) and \textbf{Algorithm 2}.
    \State Computing the value of the objective function(20a).
    \Until{The difference between consecutive values of the objective function(20a) is under $\varepsilon_{2}$.}\\
    \Return $\boldsymbol{A^{opt}, (F^o)^{opt}, P^{opt}, \rho^{opt}}$.
  \end{algorithmic}
\end{algorithm}
\begin{table}[b]	
\centering
\caption{SYSTEM CONFIGURATION ON SIMULATION}\label{table:2}
\resizebox{240pt}{60pt}{\begin{tabular}{c|c}
\toprule
\textbf{Parameter} & \textbf{Value} \\
\midrule
Number of IIoTDs and hovering positions $N$, $M$ & 50, 4 \\
UAV height $H$, area of rigion & 10 $m$, 1000 $m^2$ \\
Bandwidth $B$, Uplink power $p_i$ & 10MHz, 2.83mw \\
Channel gain $g_0$, Noise power ${\sigma ^2}$ & $-30dB$, $-60dB$ \\
Pathloss parameter $\eta_{LoS}$, $\eta_{NLoS}$ & 0.1dB, 21dB [31] \\
Pathloss parameter $a$, $b$ & 4.88, 0.49 [31]\\
Energy conversation efficiency ${\eta _0}$ & 80\% \\
Maximum computation capacity of IIoTDs ${f_{i,max}^{ue}}$ & 1MHz \\
Maximum computation capacity of UAV ${f_{max}^{uav}} $& 3MHz\\
Maximum charging capacity of UAV $p_{max}^{uav}$ & 0.1W \\
Convergence tolerance threshold $\varepsilon_{1}$, $\varepsilon_{2}$ & $10^{-6}, 10^{-10}$ \\
\bottomrule
\end{tabular}}
\end{table}
\section{Performance Analysis}
\subsection{Setting}
\begin{figure}[t]
    \centering
    \subfigure[]{
        \includegraphics[width=0.25\textwidth]{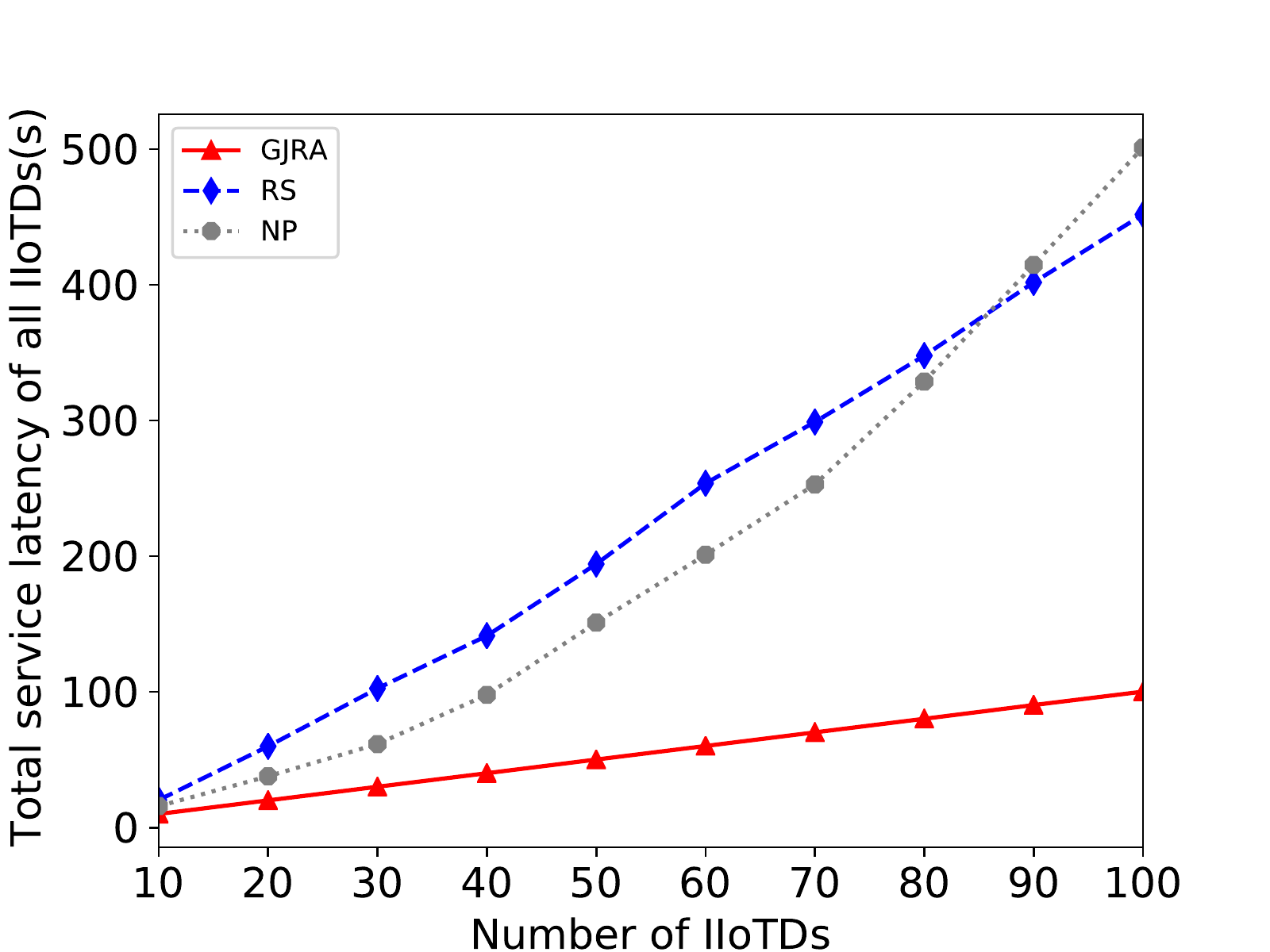}
    }%
    \subfigure[]{
        \includegraphics[width=0.25\textwidth]{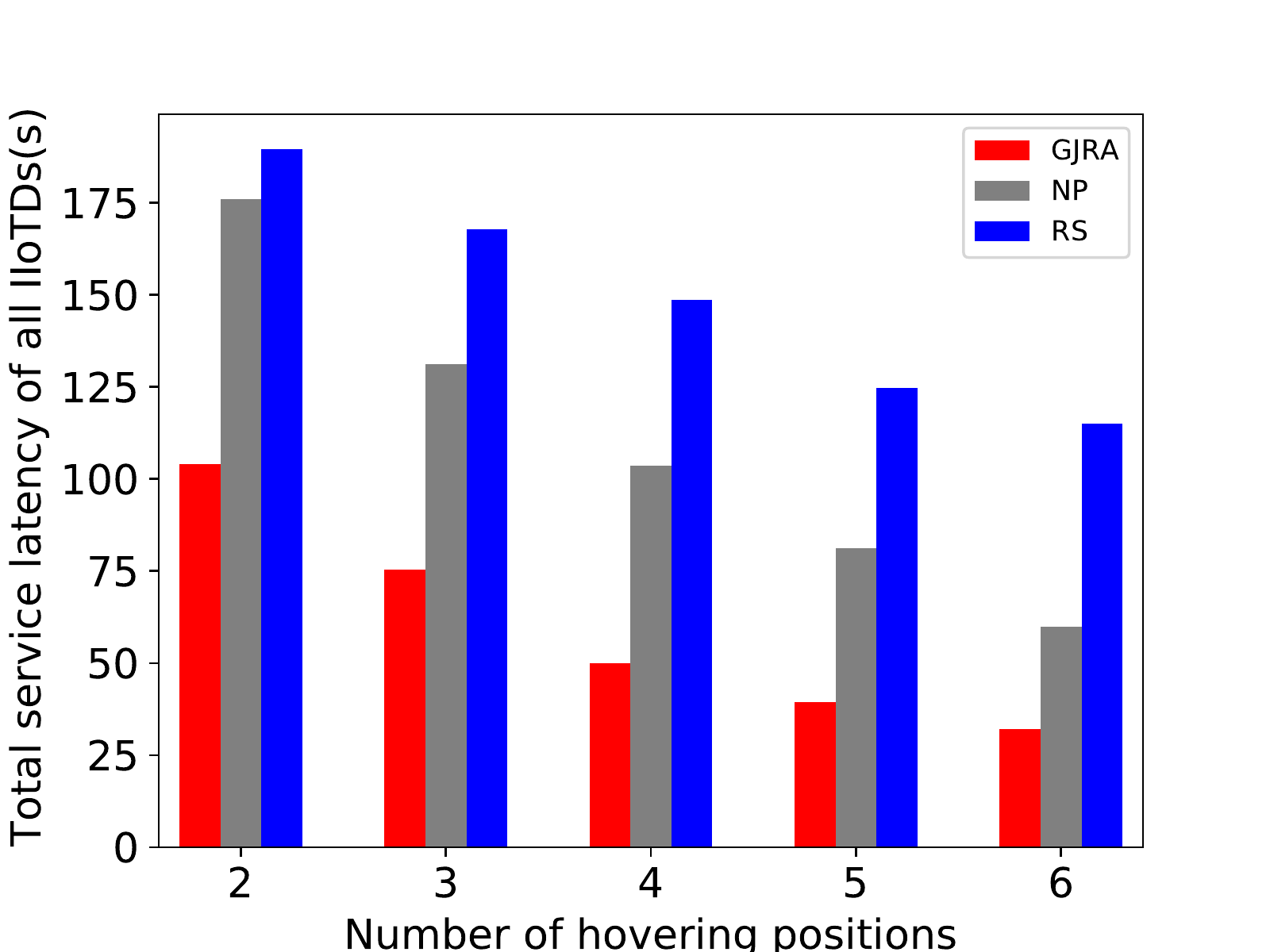}
    }
    \captionsetup{font={small,it}}
    \caption{The total service latency of all IIoTDs versus: (a)the number
of the IIoTDs; (b)the number of the hovering positions}
\end{figure}
In this section, numerical results are presented to evaluate the performance of the proposed \textbf{Algorithm 3} and the benchmark schemes. 
%All the experiments are implemented in MATLAB R2018a with an Intel Core i5-4210U 2.40 GHz CPU and 8GB RAM. 
We consider that all the IIoTDs are randomly distributed within a 2D area and the UAV flies over the area hovering \emph{M} random fixed locations. The simulation parameter settings are summarized in \textbf{Table II} unless other specifically notes.

We compare the proposed \textbf{Algorithm 3} with the following intuitive methods as baselines:
\subsubsection{Random Selection Scheme}
The task offloading decisions, the charging resources allocation and the UAV computation resources allocation are optimized while the IIoTDs connection is randomly selected, which is labeled as `RS';
\subsubsection{Nearest Position Scheme}
All of the IIoTDs select the nearest hovering position connecting with the UAV while others variables are optimized, which is labeled as `NP';
\subsubsection{Exhaustive Search Scheme}
The optimal solution of the considering system is obtained after traversing all values within the ranges of all the optimization variables, which is labeled as `EA'.

\subsection{Evaluation}

\setlength{\abovecaptionskip}{-0.1cm}
\begin{figure}[t]
    \centering
    \subfigure[]{
        \includegraphics[width=0.25\textwidth]{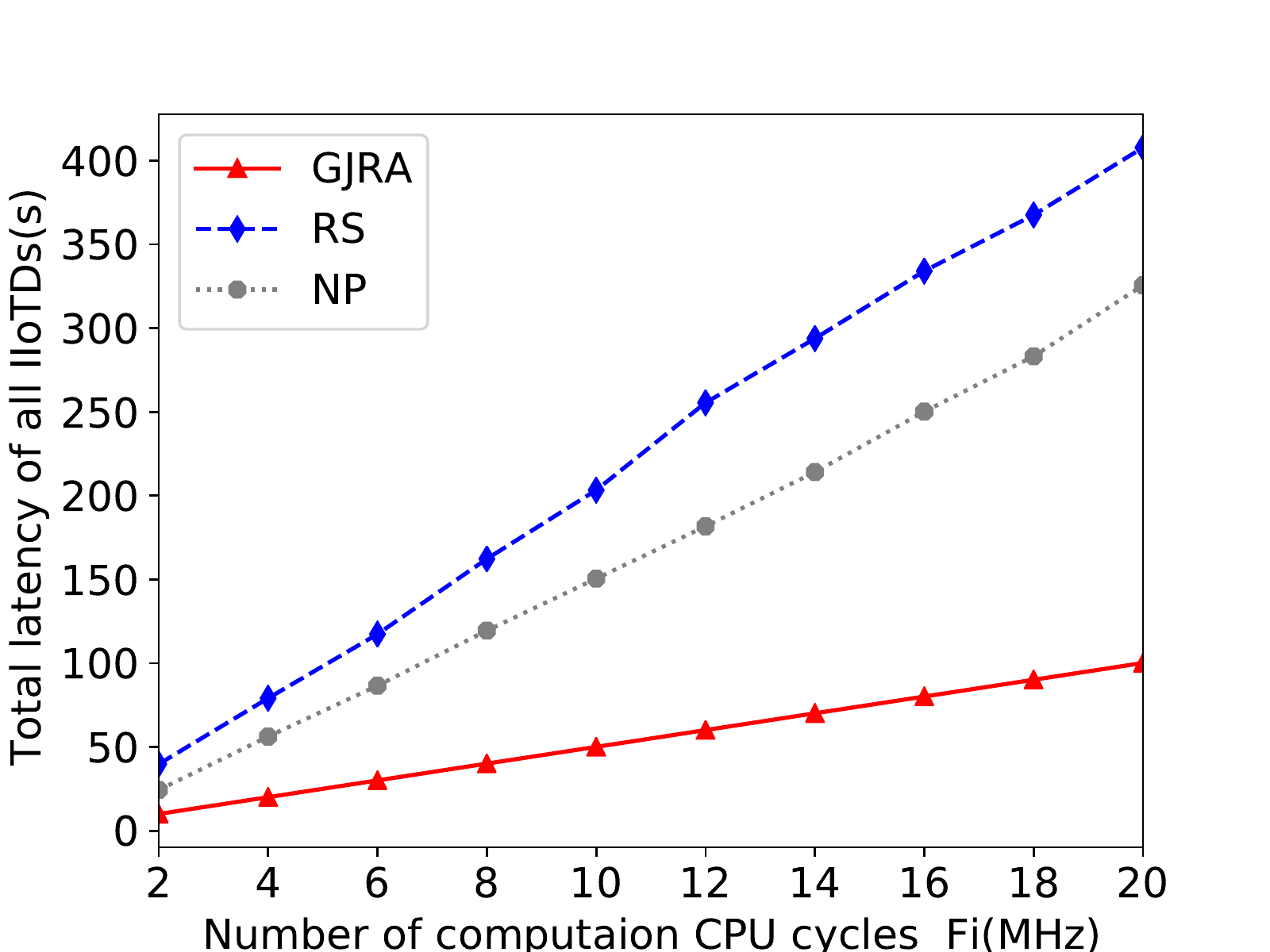}
    }%
    \subfigure[]{
        \includegraphics[width=0.25\textwidth]{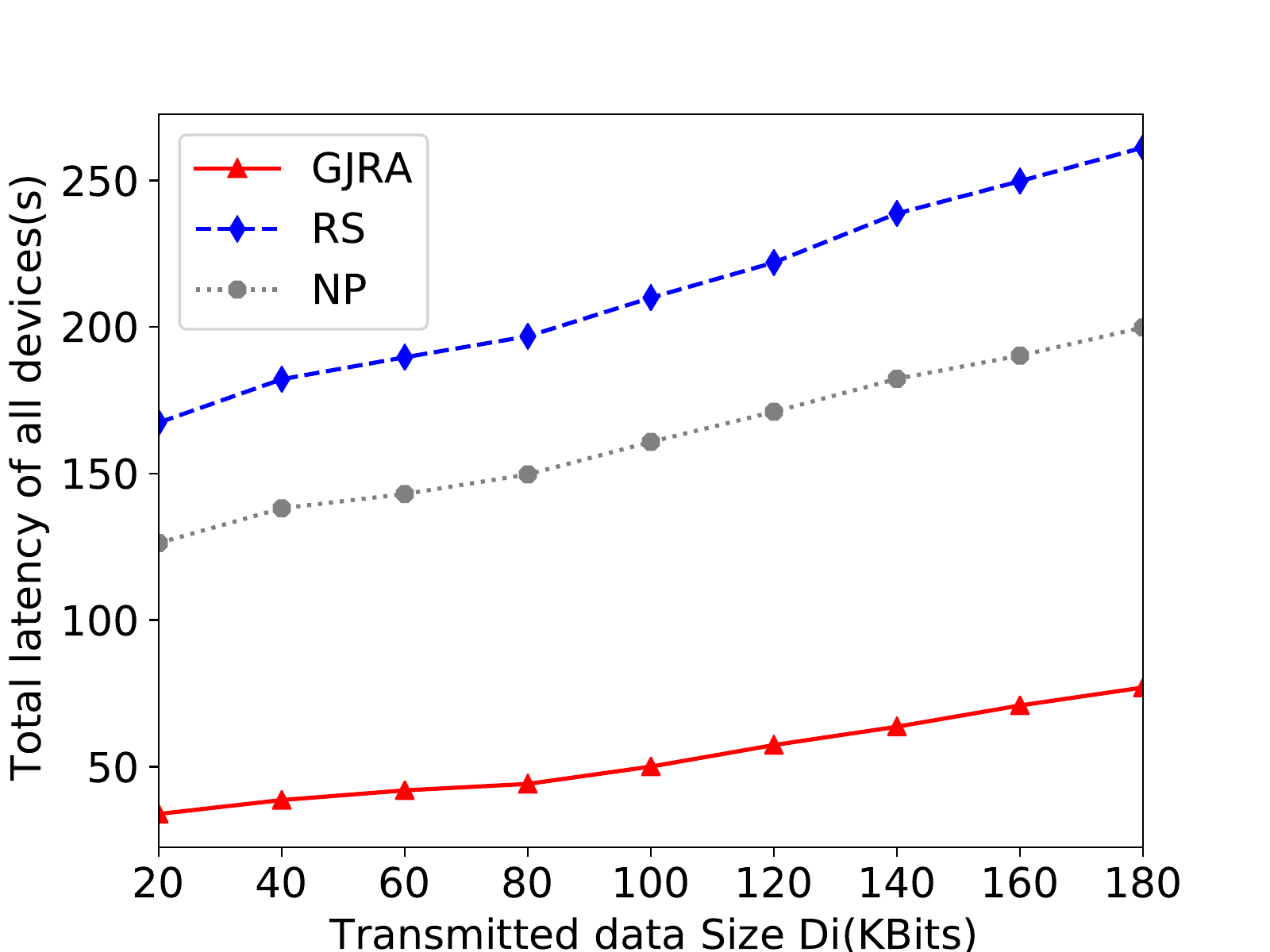}
    }
    \captionsetup{font={small,it}}
    \caption{The total service latency of all IIoTDs versus task settings: (a)the number of task computing CPU cycles; (b)data size}
\end{figure}

\begin{figure}[t]
    \centering
    \subfigure[]{
        \includegraphics[width=0.25\textwidth]{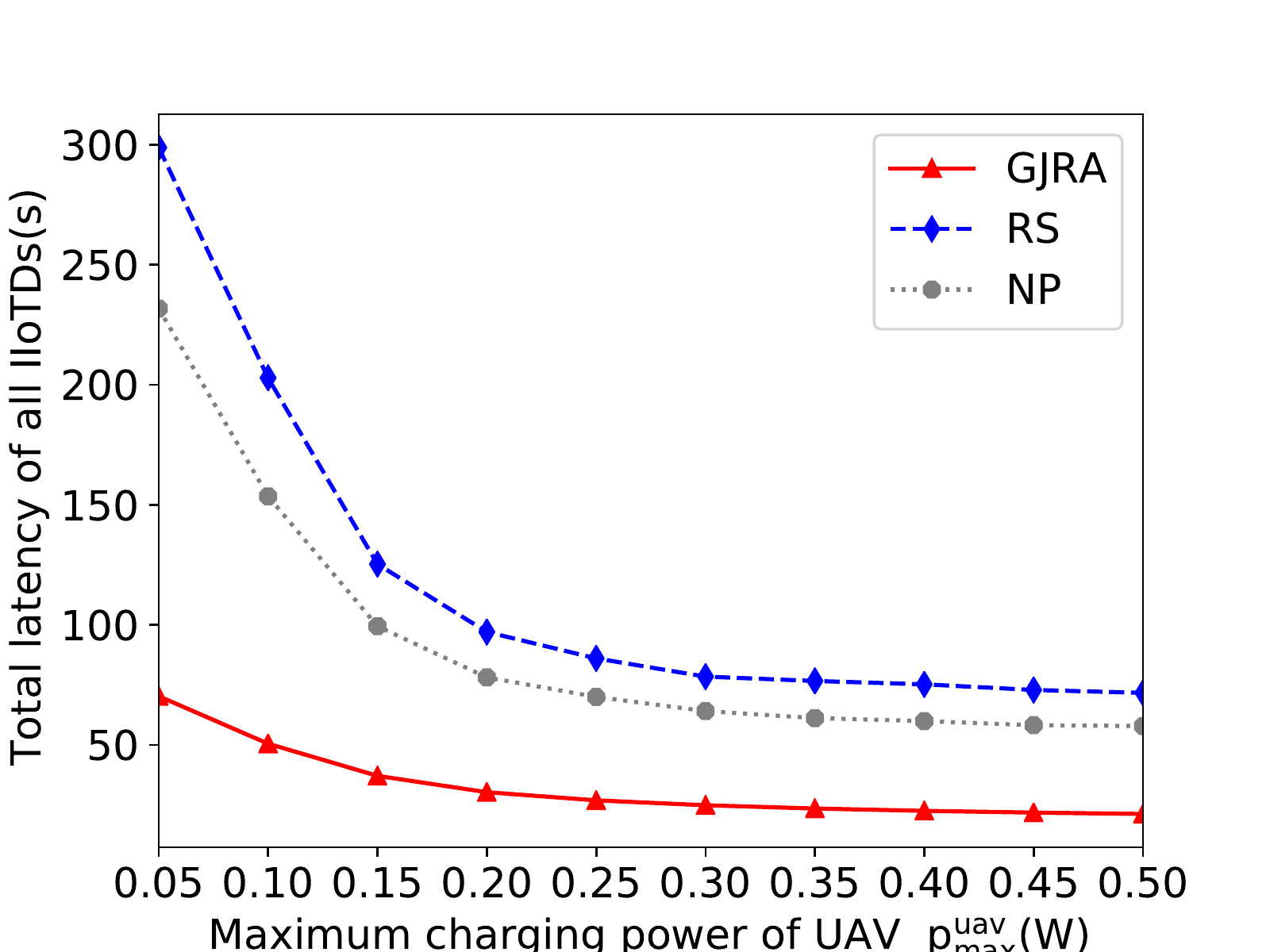}
    }%
    \subfigure[]{
        \includegraphics[width=0.25\textwidth]{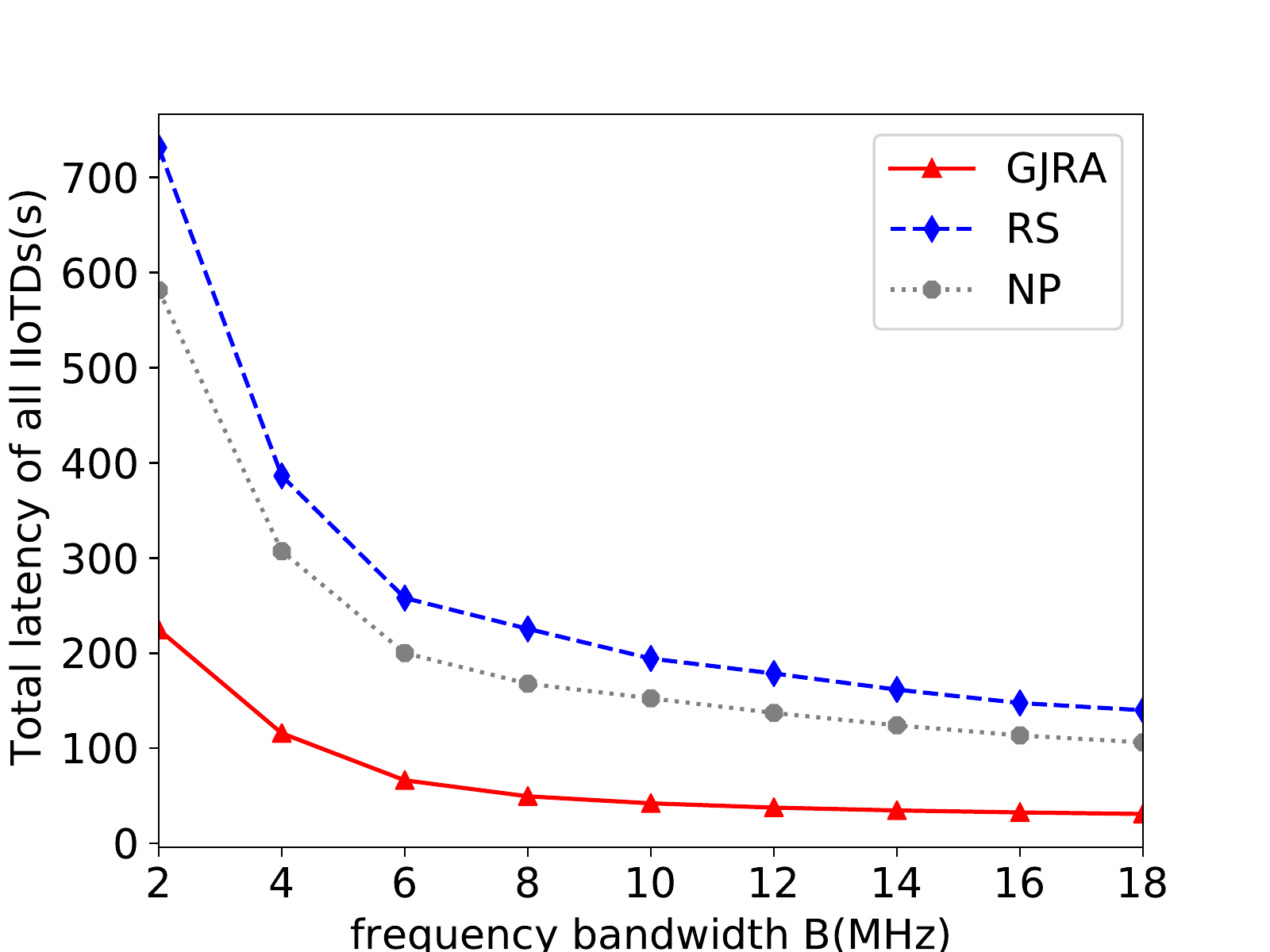}
    }
    \captionsetup{font={small,it}}
    \caption{The total service latency of all IIoTDs versus transmission settings: (a)the maximum charging capacity of UAV; (b)the system frequency bandwidth}
\end{figure}
\begin{figure}[t]
    \centering
    \subfigure[]{
        \includegraphics[width=0.25\textwidth]{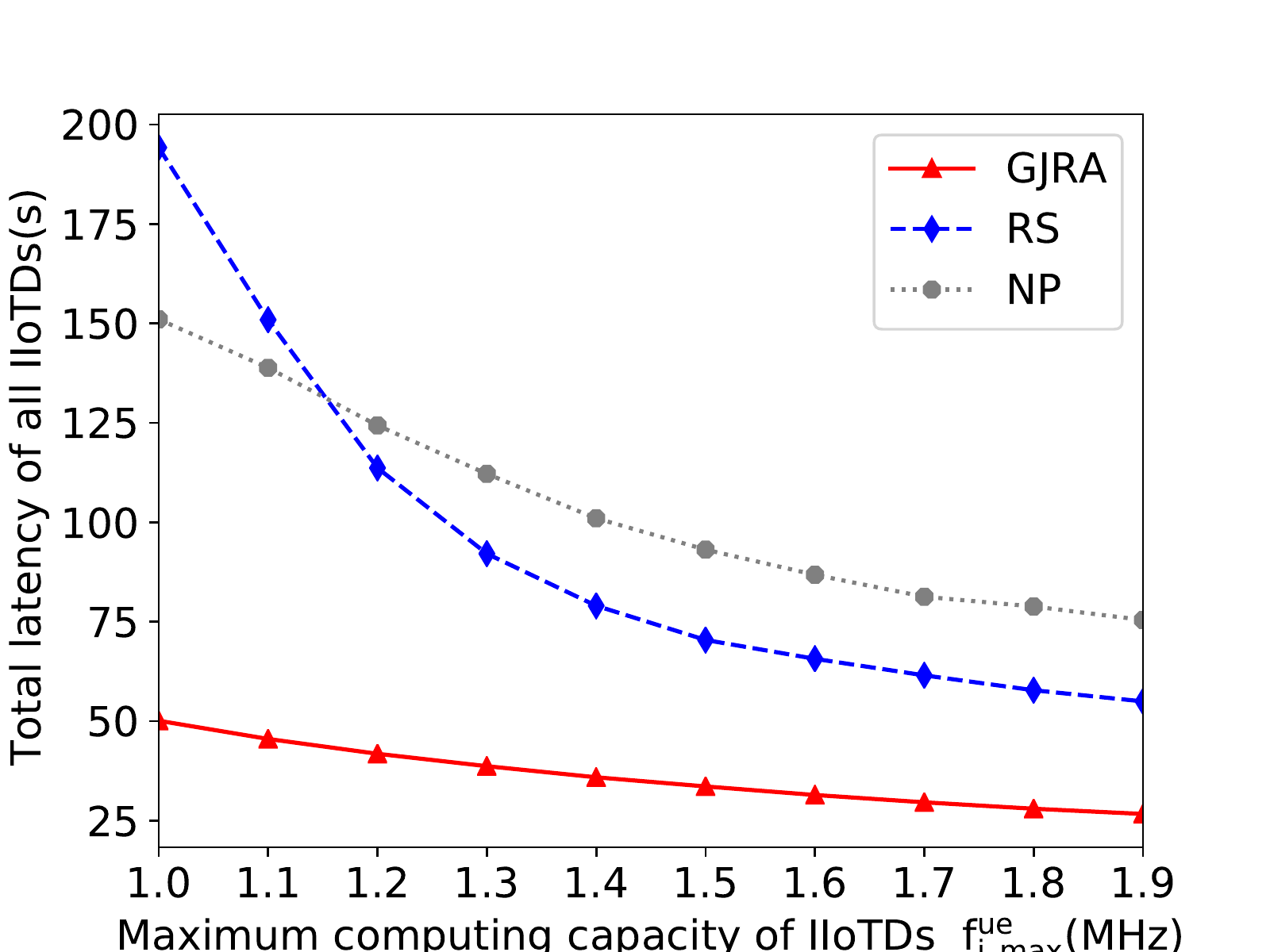}
    }%
    \subfigure[]{
        \includegraphics[width=0.25\textwidth]{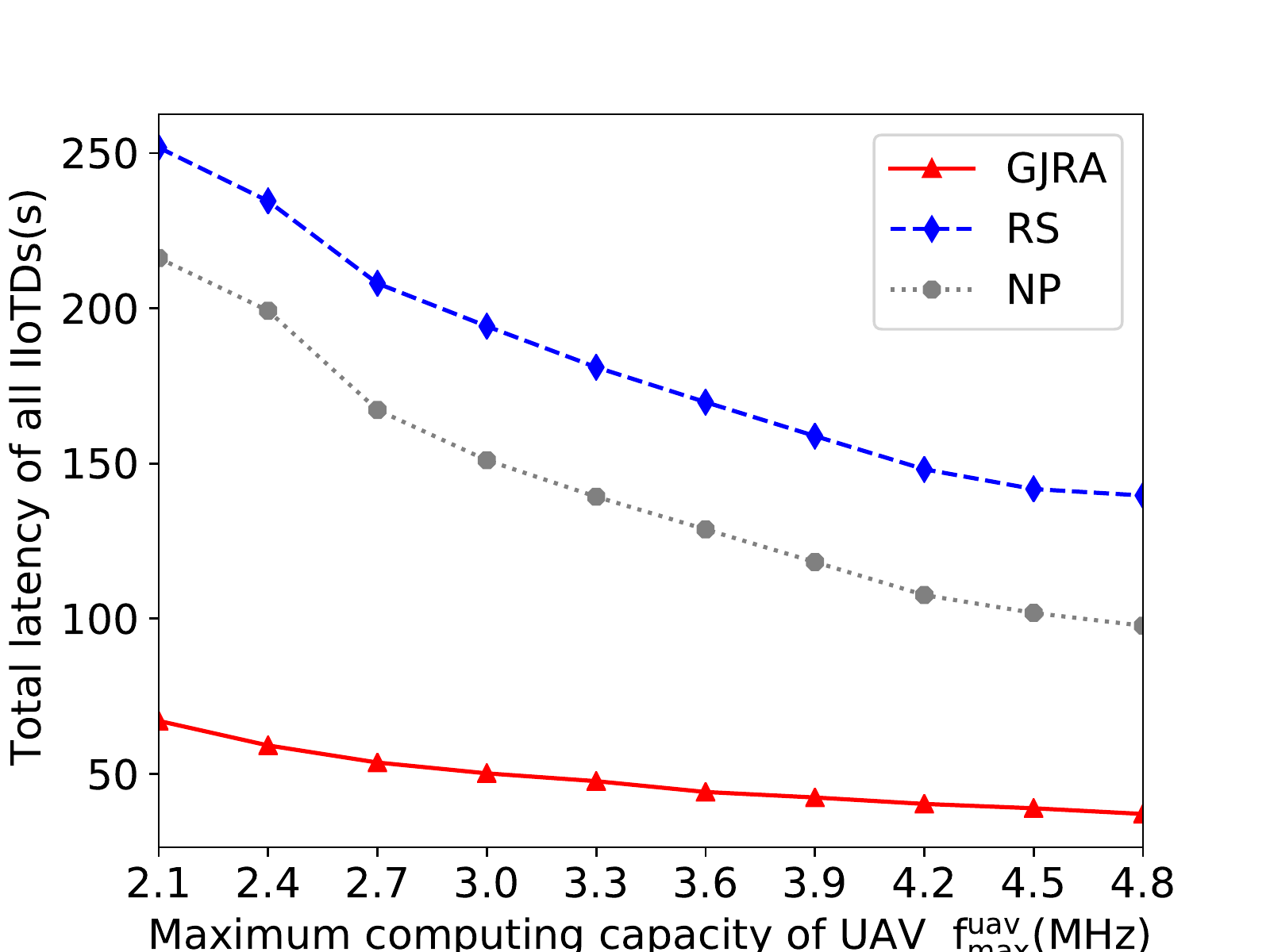}
    }
    \captionsetup{font={small,it}}
    \caption{The total service latency of all IIoTDs versus the maximum computation capability of: (a)IIoTDs; (b)UAV}
\end{figure}
\begin{figure}[t]
    \centering
    \subfigure[]{
        \includegraphics[width=0.25\textwidth]{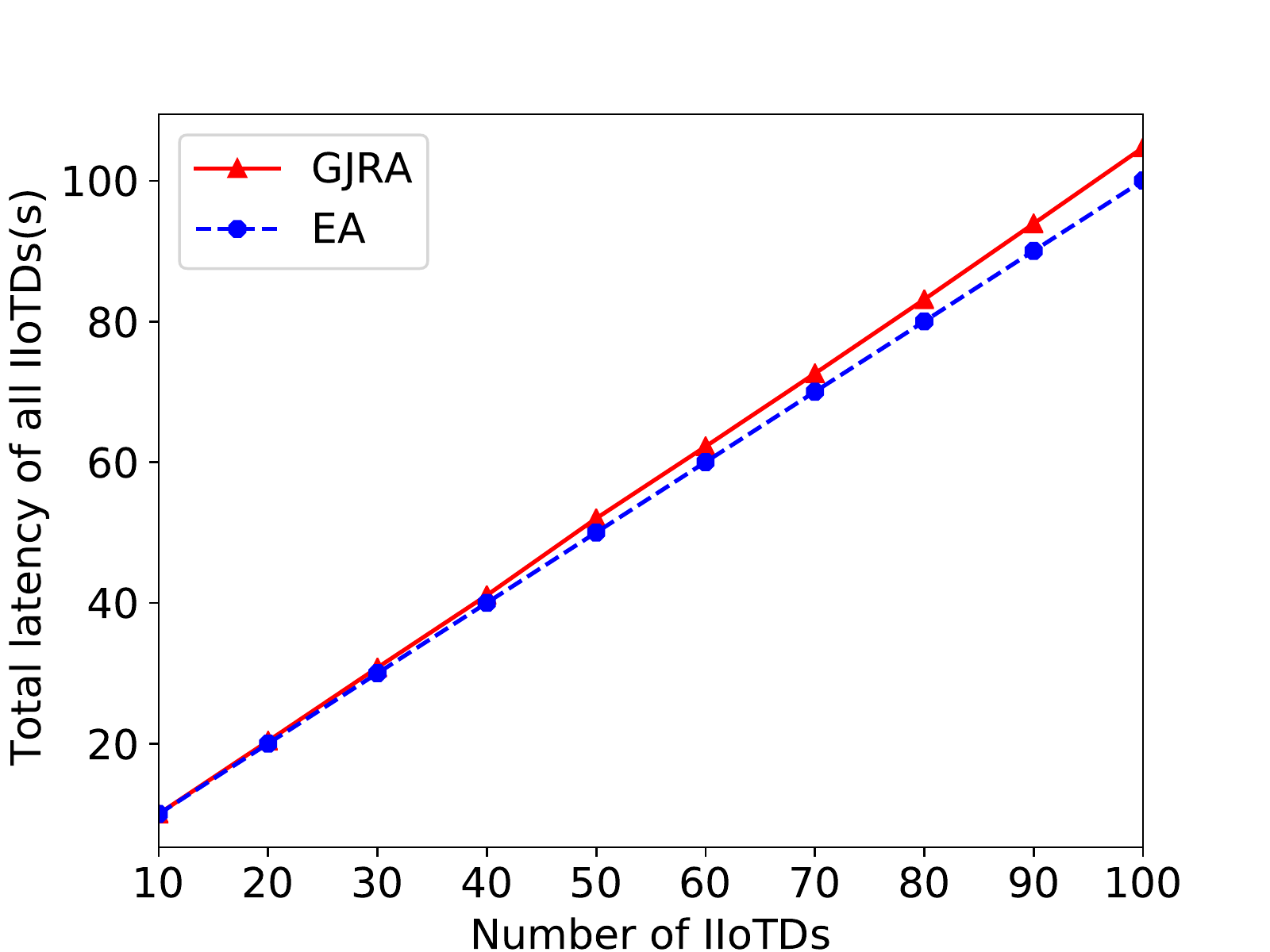}
    }%
    \subfigure[]{
        \includegraphics[width=0.25\textwidth]{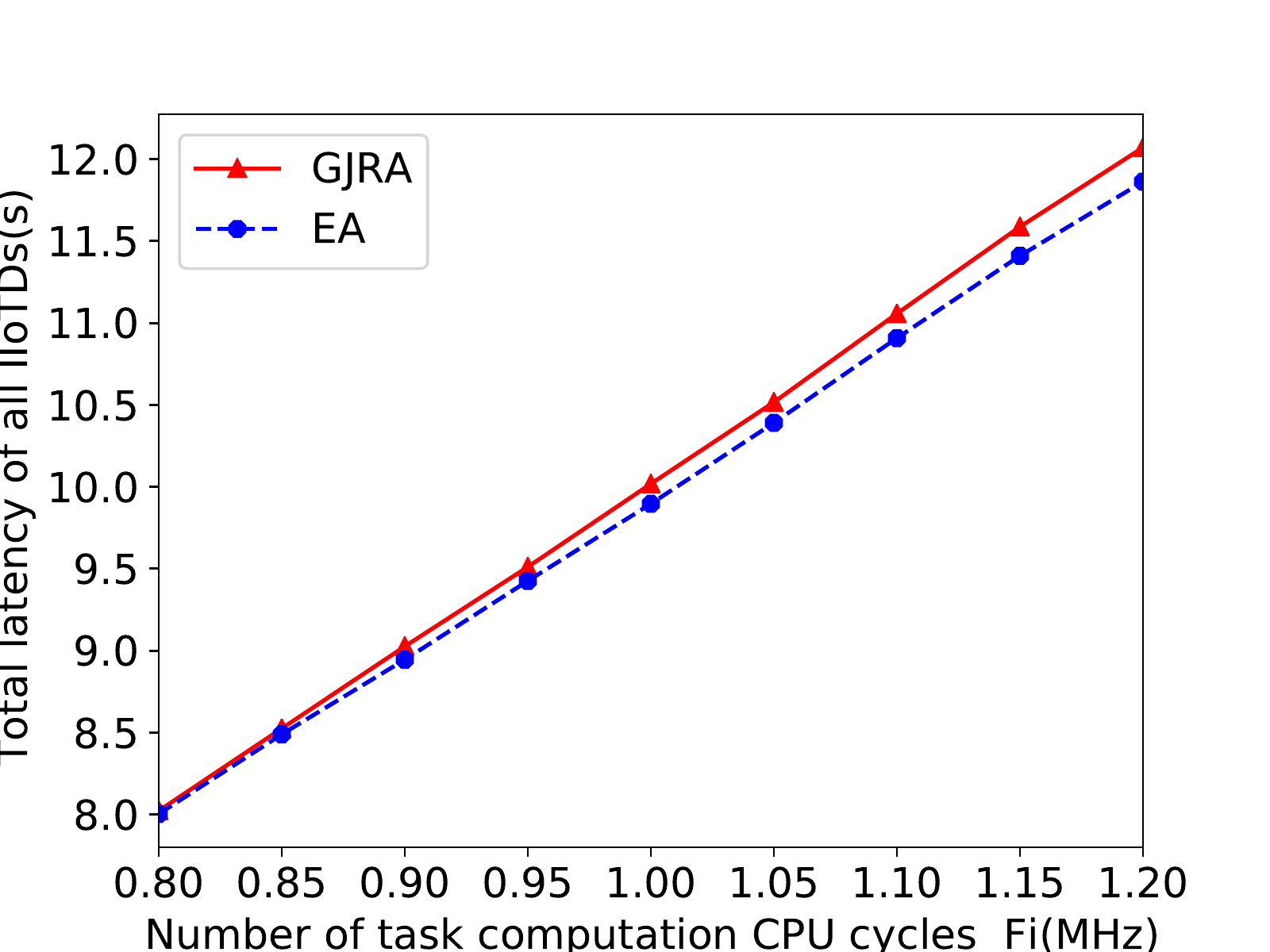}
    }
    \captionsetup{font={small,it}}
    \caption{The gap between our proposed solution and optimal solution with the increasing of: (a)the number of IIoTDs; (b)the number of task computation CPU cycles}
\end{figure}
Fig. 2(a) and Fig. 2(b) illustrates the total service latency of all IIoTDs versus the number
of the IIoTDs \emph{N} and the hovering positions \emph{M}. With the increasing of \emph{N} and \emph{M}, the total service latency of all IIoTDs increases as expected. One can see in Fig. 2(a) that the total service latency of all IIoTDs under GJRA and RS grow at the uniform pace with the increasing of the number of the IIoTDs, while the total service latency under NP grows faster at the same condition, even exceeding RS in the end. This is because when the number of IIoTDs reaches a certain value, certain hovering positions in hot area(i.e., the positions in the center of the IIoTD cluster) will connect to so many IIoTDs at the same time that no matter IIoTDs choose either local execution mode or task offloading mode the service latency will increase dramatically due to the limited resources, while GJRA can effectively avoid the emergence of this problem. One can see in Fig. 2(b) that the gap among our proposed scheme and two benchmark schemes gradually increases with the increasing of the hovering positions. Besides, there is no significant difference between nearest position scheme and random selection scheme with a small amount of UAV hovering position, on account of the similar connection scheme for IIoTD with few alternatives. From the figure, we also can see GJRA outperforms the other two benchmarks.

Fig. 3(a) and Fig. 3(b) shows the total service latency of all IIoTDs versus task settings $F_i$ and $D_i$, respectively. With the increasing both of $F_i$ and $D_i$, the total service latency of all IIoTDs increases, as expected. Besides, by comparing the two figures, we can see that the computing requirement of IIoTDs has a bigger impact to total service latency than communication requirement. Moreover, GJRA outperforms the other two benchmarks and the total service latency of all IIoTDs is significantly reduced by the proposed scheme and algorithms.

Fig. 4(a) and Fig. 4(b) shows the total service latency of all IIoTDs versus transmission settings. With the increasing both of $p_{\max }^{uav}$ and $B$, the energy harvesting time and the data transmission time would decrease correspondingly, which leads the decreases of the total service latency of all IIoTDs, as well as the gaps among three schemes. When $p_{\max }^{uav}$ and $B$ increase to a higher value, the latency of energy harvesting data transmission become too small to impact the total service latency powerfully. Moreover, GJRA outperforms the other two benchmarks.

Fig. 5(a) and Fig. 5(b) illustrate the total service latency of all IIoTDs versus the maximum computing capacity of IIoTDs and UAV. With the increasing both of $f_{i,\max }^{ue}$ and $f^{uav}_{max}$, the total service latency decreases and GJRA outperforms the benchmarks. Notice that with the increasing both of the maximum computing capacity of IIoTDs $f_{i,\max }^{ue}$, more IIoTDs would offload thire tasks to the UAV due to the less competition of UAV computation resources, which results in a slowing down of declines of the total service latency of all IIoTDs under all simulated schemes. Moreover, the total service latency of all IIoTDs under NP decreases slower after the maximum computing capacity of IIoTDs $f_{i,\max }^{ue}$ reaching a lager value, even exceeding RS. This is because that the reduction of competitive pressure on UAV computation resources is smaller than the other two schemes mentioned due to the existence of some hot hovering positions(i.e., the positions in the center of the IIoTD cluster) connecting with more IIoTDs, while GJRA can manage the connection between IIoTDs and UAV intelligently which can avoid the emergence of this problem effectively.

Fig. 6(a) and Fig. 6(b) illustrates the gap between GJRA and optimal solution EA with the increasing of the number of IIoTDs and task computation CPU cycles. In Fig. 6(b), we suppose the total number of IIoTDs in the system is 10. Both of them show the comparison between our proposed solution and the exhaustive search scheme, which can be considered as the optimal solution. One can see that the performance of our algorithm is close to the exhaustive algorithm. However, exhaustive search scheme have to searches all the feasible solution before finding the optimal solution which has the lowest efficiency obviously while GJRA with much less complexity.
\setlength{\abovecaptionskip}{0.1cm}
\begin{figure}[b]
\centering
\includegraphics[scale=0.25]{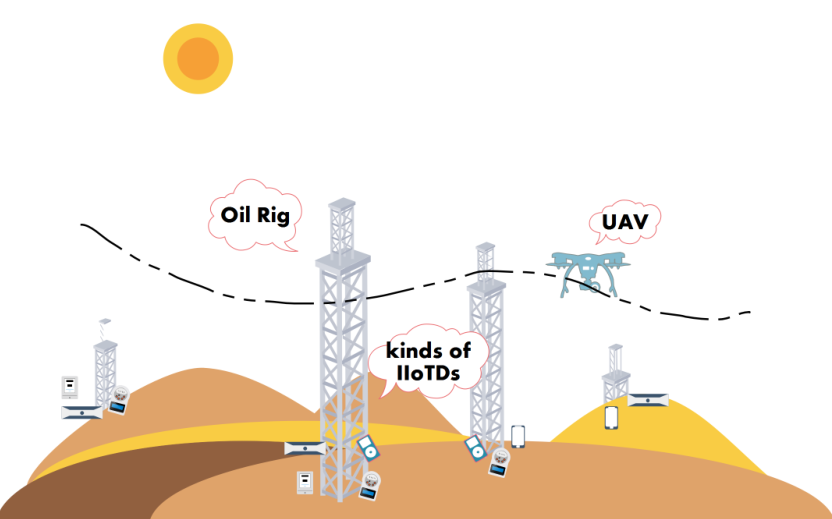}
\captionsetup{font={small,it}}
\caption{The proposed industrial application scene and working process}
\end{figure}
\subsection{Industrial Applications}
In this subsection, we will illustrate the system operation with a practical industrial scenario to show the practicability of our proposed global joint resource allocation scheme for UAV service of PEC in IIoTs.

To take full advantage of the maneuverability of UAVs, we assumed an petroleum exploration project in the desert, as shown in the Fig. 7, where multiple oil rigs distributes in the region as well as kinds of IIoTDs. The working process of the system is shown in the Fig. 8.

\begin{figure}[t]
\centering
\includegraphics[scale=0.30]{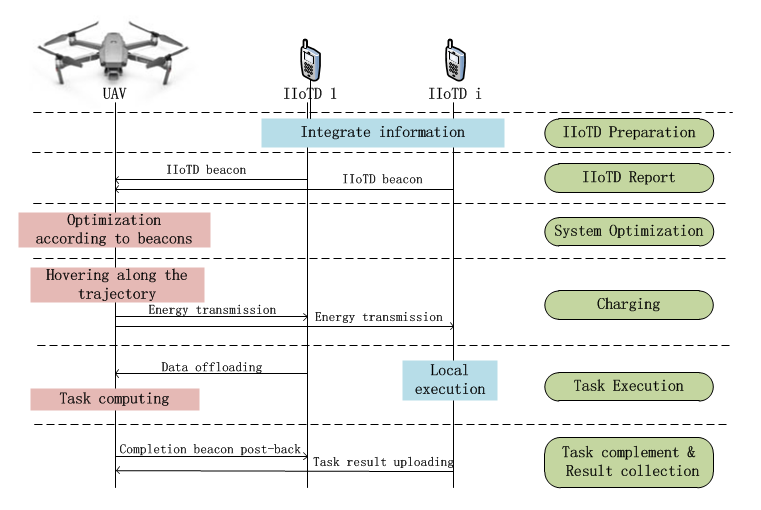}
\captionsetup{font={small,it}}
\caption{The total service latency of all IIoTDs versus the system frequency bandwidth}
\end{figure}
\subsubsection{IIoTD Preparation}
In the IIoTD preparation step, the IIoTDs perform data sensing mode and the computation and communication module is turned off for energy conservation.
\subsubsection{IIoTD Report}
In the IIoTD report step, the UAV flies around the area along the fixed trajectory to perceive the location of all IIoTDs and get beacons from them. The beacon of each IIoTD contains its ID number, task data size and current location.
\subsubsection{System Optimization}
When receiving the beacons of all IIoTDs, the UAV categorizes the IIoTDs into local execution mode and task offloading mode, and pre-allocates the resources by the propose global joint resource allocation scheme.
\subsubsection{Charging}
After preparation and optimization, the UAV flies and hovers with the pre-defined trajectory as a PEC server and a mobile power source. First of all, the UAV transmits power to each connected IIoTD at each hovering position.
\subsubsection{Task Execution}
At the each hovering position, the UAV computes all the data migrated from the IIoTDs which perform task offloading mode, and the IIoTDs perform local execution mode process their task locally.
\subsubsection{Task Complement and Result Collection}
When the tasks are accomplished, the IIoTDs perform local execution mode upload the task results and the UAV post back a completion beacon to the IIoTDs which perform task offloading mode.

\section{Conclusion and Future Work}
In this paper, we presented a global joint resource allocation scheme for UAV service of PEC in IIoTs to prolong the IIoTDs services and enhance the system performance. Specifically, the overall service latency of all IIoTDs was minimized via jointly optimizing the task offloading decisions, charging resources allocation, connection management and UAV computation resources allocation. To solve this MINLP problem, we proposed an two-layer iterative algorithm through solving four sub-problems, developed by BCD method. The performance analysis validated that the total service latency of IIoTDs can be effectively saved by applying our global joint resource allocation scheme. It was shown that the performance achieved by our proposed scheme is superior to the benchmarks. Moreover, the simulation results verified the efficiency of our proposed alternative algorithms and theoretical analysis.

Based on the research of this paper, we will extend our work to multi-UAV deployment and multi-hop PEC network in the future. Besides, the trajectory design, tradeoff between energy consumption and latency for UAV service of PEC in IIoTs system are also worth of further investigation. Moreover, the framework of Cloud-Edge-Device collaboration in cellular networks cloud be the directions for future work.

%\bibliographystyle{IEEEtran}
%\bibliography{IEEEabrv,mybib}

\begin{thebibliography}{1}
\bibitem{1}
Shivi Sharma, Hemraj Saini,”Fog assisted task allocation and secure deduplication using 2FBO2 and MoWo in cluster-based industrial IoT (IIoT),” \emph{Computer Communications.}, ,vol. 152,2020,pp. 187-199.Available:https://doi.org/10.1016/j.comcom.2020.01.042.
\bibitem{2}
J. Wang, C. Jiang, K. Zhang, X. Hou, Y. Ren and Y. Qian, "Distributed Q-Learning Aided Heterogeneous Network Association for Energy-Efficient IIoT," \emph{IEEE Transactions on Industrial Informatics.},  vol. 16, no. 4, pp. 2756-2764, April 2020.
\bibitem{3}
Zhou, C et al. “Industrial Internet of Things: (IIoT) applications in underground coal mines,” \emph{Mining engineering.}, vol. 69,12 (2017): 50-56.
\bibitem{4}
C. Wu, Y. Zhang and Y. Deng, "Toward Fast and Distributed Computation Migration System for Edge Computing in IoT," \emph{IEEE Internet of Things Journal.}, vol. 6, no. 6, pp. 10041-10052, Dec. 2019.
\bibitem{5}
Q. Meng, K. Wang, X. He and M. Guo, "QoE-driven big data management in pervasive edge computing environment," \emph{Big Data Mining and Analytics.}, vol. 1, no. 3, pp. 222-233, September 2018.
\bibitem{6}
L. Sun, L. Wan, K. Liu and X. Wang, "Cooperative-Evolution-Based WPT Resource Allocation for Large-Scale Cognitive Industrial IOT," \emph{IEEE Transactions on Industrial Informatics.}.
\bibitem{7}
Huang, Jun et al. “Wireless Power Transfer and Energy Harvesting: Current Status and Future Prospects,” \emph{IEEE wireless communications.}, [10]vol. 26,4 (2019): 10.1109/mwc.2019.1800378.
\bibitem{8}
Y. Guo, S. Yin, J. Hao and Y. Du, "A Novel Trajectory Design Approach for UAV Based on Finite Fourier Series," \emph{IEEE Wireless Communications Letters.}.
\bibitem{9}
S. Ahmed, M. Z. Chowdhury and Y. M. Jang, "Energy-Efficient UAV Relaying Communications to Serve Ground Nodes," \emph{IEEE Communications Letters.}, vol. 24, no. 4, pp. 849-852, April 2020.
\bibitem{11}
L. Wang, L. Jiao, J. Li, J. Gedeon and M. Mühlhuser, "MOERA: Mobility-Agnostic Online Resource Allocation for Edge Computing," \emph{IEEE Transactions on Mobile Computing.}, vol. 18, no. 8, pp. 1843-1856, 1 Aug. 2019.
\bibitem{12}
Z. Chang, L. Liu, X. Guo and Q. Sheng, “Dynamic Resource Allocation and Computation Offloading for IoT Fog Computing System,” \emph{IEEE Transactions on Industrial Informatics.}.
\bibitem{13}
J. Zhao, Q. Li, Y. Gong and K. Zhang, “Computation Offloading and Resource Allocation For Cloud Assisted Mobile Edge Computing in Vehicular Networks,” \emph{IEEE Transactions on Vehicular Technology.}, vol. 68, no. 8, pp. 7944-7956, Aug. 2019.
\bibitem{15}
T. Zhang, Y. Xu, J. Loo, D. Yang and L. Xiao, “Joint Computation and Communication Design for UAV-Assisted Mobile Edge Computing in IoT,” \emph{IEEE Transactions on Industrial Informatics.}, vol. 16, no. 8, pp. 5505-5516, Aug. 2020.
\bibitem{16}
Z. Yang, C. Pan, K. Wang and M. Shikh-Bahaei, "Energy Efficient Resource Allocation in UAV-Enabled Mobile Edge Computing Networks," \emph{IEEE Transactions on Wireless Communications.}, vol. 18, no. 9, pp. 4576-4589, Sept. 2019.
\bibitem{17}
Z. Ning, X. Wang, J. J. P. C. Rodrigues and F. Xia, “Joint Computation Offloading, Power Allocation, and Channel Assignment for 5G-Enabled Traffic Management Systems,” \emph{IEEE Transactions on Industrial Informatics.}, vol. 15, no. 5, pp. 3058-3067, May 2019.
\bibitem{18}
T. Yang, Y. Hu, X. Yuan and R. Mathar, "Genetic Algorithm based UAV Trajectory Design in Wireless Power Transfer Systems," \emph{2019 IEEE Wireless Communications and Networking Conference (WCNC).}, Marrakesh, Morocco, 2019, pp. 1-6.
\bibitem{19}
S. Ku, S. Jung and C. Lee, "UAV Trajectory Design Based on Reinforcement Learning for Wireless Power Transfer," \emph{2019 34th International Technical Conference on Circuits/Systems.}, Computers and Communications (ITC-CSCC), JeJu, Korea (South), 2019, pp. 1-3.
\bibitem{20}
J. Baek, S. I. Han and Y. Han, "Optimal UAV Route in Wireless Charging Sensor Networks," \emph{IEEE Internet of Things Journal.}, vol. 7, no. 2, pp. 1327-1335, Feb. 2020
\bibitem{20}
J. Xu, Y. Zeng and R. Zhang, "UAV-Enabled Wireless Power Transfer: Trajectory Design and Energy Region Characterization," \emph{2017 IEEE Globecom Workshops (GC Wkshps).}, Singapore, 2017, pp. 1-7.
\bibitem{21}
L. Xie, J. Xu and R. Zhang, "Throughput Maximization for UAV-Enabled Wireless Powered Communication Networks," \emph{IEEE Internet of Things Journal.}, vol. 6, no. 2, pp. 1690-1703, April 2019.
\bibitem{22}
S. Yin, L. Li and F. R. Yu, "Resource Allocation and Basestation Placement in Downlink Cellular Networks Assisted by Multiple Wireless Powered UAVs," \emph{IEEE Transactions on Vehicular Technology.}, vol. 69, no. 2, pp. 2171-2184, Feb. 2020.
\bibitem{23}
W. Chen, S. Zhao, Q. Shi and R. Zhang, "Resonant Beam Charging-Powered UAV-Assisted Sensing Data Collection," \emph{IEEE Transactions on Vehicular Technology.}, vol. 69, no. 1, pp. 1086-1090, Jan. 2020.
\bibitem{24}
S. ur Rahman, G. Kim, Y. Cho and A. Khan, "Positioning of UAVs for throughput maximization in software-defined disaster area UAV communication networks," \emph{Journal of Communications and Networks.}, vol. 20, no. 5, pp. 452-463, Oct. 2018.
\bibitem{25}
S. Zhang, H. Zhang, B. Di and L. Song, "Cellular UAV-to-X Communications: Design and Optimization for Multi-UAV Networks," \emph{IEEE Transactions on Wireless Communications.}, vol. 18, no. 2, pp. 1346-1359, Feb. 2019.
\bibitem{26}
H. Mei, K. Wang, D. Zhou and K. Yang, "Joint Trajectory-Task-Cache Optimization in UAV-Enabled Mobile Edge Networks for Cyber-Physical System," \emph{IEEE Access.}, vol. 7, pp. 156476-156488, 2019.
\bibitem{27}
L. Xie, J. Xu and R. Zhang, "Throughput Maximization for UAV-Enabled Wireless Powered Communication Networks," \emph{IEEE Internet of Things Journal.}, vol. 6, no. 2, pp. 1690-1703, April 2019.
\bibitem{28}
J. Xu, Y. Zeng and R. Zhang, "UAV-Enabled Wireless Power Transfer: Trajectory Design and Energy Optimization," \emph{IEEE Transactions on Wireless Communications.}, vol. 17, no. 8, pp. 5092-5106, Aug. 2018.
\bibitem{29}
Y. Du, K. Yang, K. Wang, G. Zhang, Y. Zhao and D. Chen, "Joint Resources and Workflow Scheduling in UAV-Enabled Wirelessly-Powered MEC for IoT Systems," \emph{IEEE Transactions on Vehicular Technology.}, vol. 68, no. 10, pp. 10187-10200, Oct. 2019.
\bibitem{30}
S. Boyd and L. Vandenberghe, \emph{Convex Optimization.}, Cambridge, UK.:Cambridge Univ. Press, 2004.
\bibitem{31}
D. P. Bertsekas, \emph{Convex Optimization Theory.}, Belmont, MA, USA:Athena Scientific, 2009.
\bibitem{32}
K. Tu, J. Rodríguez-Pi?eiro, X. Yin and L. Tian, "Low Altitude Air-to-Ground Channel Modelling Based on Measurements in a Suburban Environment," \emph{2019 11th International Conference on Wireless Communications and Signal Processing (WCSP).}, Xi'an, China, 2019, pp. 1-6.
\end{thebibliography}
%\texttt{\cite}

% use section* for acknowledgment
%\section*{Acknowledgment}
%
%
%The authors would like to thank...

% Can use something like this to put references on a page
% by themselves when using endfloat and the captionsoff option.
\ifCLASSOPTIONcaptionsoff
  \newpage
\fi

\begin{IEEEbiography}[{\includegraphics[width=1in,height=1.25in,clip,keepaspectratio]{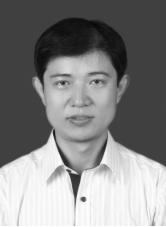}}]{Jin Wang}
 (SM’18) received the B.S. and M.S. degrees from the Nanjing University of Posts and Telecommunications, China, in 2002 and 2005, respectively, and the Ph.D. degree from Kyung Hee University Korea, in 2010. He is currently a Professor with the School of Computer and
Communication Engineering, Changsha University of Science and Technology. His research
interests mainly include wireless communications and networking, performance evaluation, and
optimization. He is a member of the ACM.
\end{IEEEbiography}

\begin{IEEEbiography}[{\includegraphics[width=1in,height=1.25in,clip,keepaspectratio]{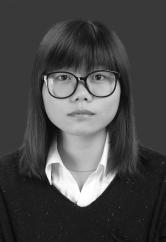}}]{Caiyan Jin}
received the B.E. degree in Internet of Things Engineering from the school of information science and technology, Southwest Jiaotong University, Chengdu, China, in 2018. Her current reserch interests include wireless UAV communication and mobile edge computing.
\end{IEEEbiography}

\begin{IEEEbiography}[{\includegraphics[width=1in,height=1.25in,clip,keepaspectratio]{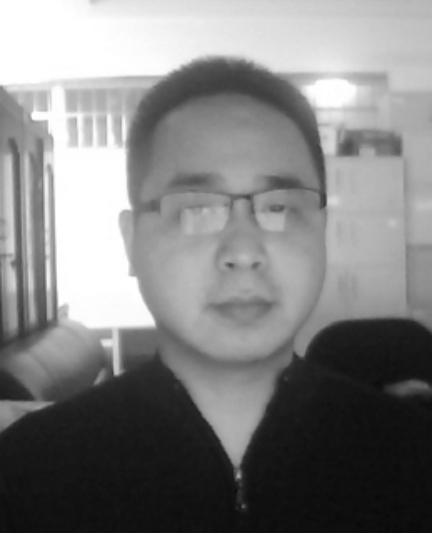}}]{Qiang Tang}
received the B.E., M.S., and Ph.D. degrees in control science and engineering from the Huazhong University of Science and Technology, Wuhan, China, in 2005, 2007, and 2010, respectively. He is an academic visitor sponsored by CSC in University of Essex during 2016-2017. He is currently a Lecturer with the School of Computer and Communication Engineering, Changsha University of Science and Technology, Changsha, China. His research interests include wireless networks, mobile edge computing, and smart grid.
\end{IEEEbiography}

\begin{IEEEbiography}[{\includegraphics[width=1in,height=1.25in,clip,keepaspectratio]{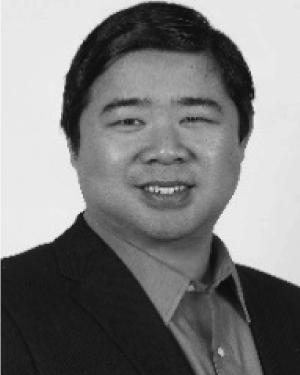}}]{Naixue Xiong}
received the Ph.D. degree in sensor system engineering from Wuhan University and the Ph.D. degree in dependable sensor networks from the Japan Advanced Institute of Science and Technology., Before he attended Tianjin University, he worked in Northeastern State University, Georgia State University, Wentworth Technology Institution, and Colorado Technical University (Full Professor about five years) about ten years. He is currently a Professor with the College of Intelligence and Computing, Tianjin University, China. He has published over 300 international journal articles and over 100 international conference papers. Some of his works were published in IEEE JSAC, IEEE or ACM Transactions, ACM Sigcomm workshop, IEEE INFOCOM, ICDCS, and IPDPS. His research interests include cloud computing, security and dependability, parallel and distributed computing, networks, and optimization theory.,Dr. Xiong is a Senior Member of the IEEE Computer Society. He has received the Best Paper Award in the 10th IEEE International Conference on High Performance Computing and Communications (HPCC-08) and the Best student Paper Award in the 28th North American Fuzzy Information Processing Society Annual Conference (NAFIPS2009). He is also the Chair of the Trusted Cloud Computing Task Force, the IEEE Computational Intelligence.
\end{IEEEbiography}

% if you will not have a photo at all:
%\begin{IEEEbiographynophoto}{John Doe}
%%Biography text here.
%\end{IEEEbiographynophoto}

% insert where needed to balance the two columns on the last page with
% biographies
%\newpage
%
%\begin{IEEEbiographynophoto}{Jane Doe}
%Biography text here.
%\end{IEEEbiographynophoto}

% You can push biographies down or up by placing
% a \vfill before or after them. The appropriate
% use of \vfill depends on what kind of text is
% on the last page and whether or not the columns
% are being equalized.

%\vfill

% Can be used to pull up biographies so that the bottom of the last one
% is flush with the other column.
%\enlargethispage{-5in}

% that's all folks
\end{document}